\theoremstyle{defi}
\newtheorem{defi}{Definition}[section]
\newtheorem{ex}[defi]{Example}
\theoremstyle{plain}
\newtheorem{lema}[defi]{Lemma}
\newtheorem*{lema*}{Lemma}
\newtheorem{thm}[defi]{Theorem}
\newtheorem{prop}[defi]{Proposition}
\newtheorem{cor}[defi]{Corollary}
\newtheorem{rk}[defi]{Remark}
\definecolor{pink}{rgb}{1,0,1}
\newtheorem{teo-def}[defi]{Theorem/Definition}
\newcommand{\ZZ}{\mathbb{Z}}
\newcommand{\CC}{\mathbb{C}}
\newcommand{\RR}{\mathbb{R}}
\newcommand{\dd}{\mathrm{diag}}
\newcommand{\Arg}[1]{\theta_{#1}}
\newcommand{\Log}{\mathrm{Log}}
\newenvironment{proofThm11}{\noindent {\textit{Proof of Theorem 1.1.}}}{$\square$ \vspace{3mm}}
\newif\ifprivate
\def\???{\ifprivate {\bf {???}} \marginpar{{\Huge {\bf ?}}}
\else \fi}
 \definecolor{zielony}{rgb}{0.5, 0.9, 0.1}
 \definecolor{czerwony}{rgb}{0.9, 0.2, 0.1}
 \definecolor{niebieski}{rgb}{0.3, 0.1, 0.9}
\definecolor{garnet}{RGB}{210,15,30}
\title{Embeddability and rate identifiability of \\ Kimura 2-parameter matrices}
\author{Marta Casanellas, Jes\'us Fern\'andez-S\'anchez, Jordi Roca-Lacostena}
\begin{document}

\maketitle

\begin{abstract}

Deciding whether a {substitution} matrix is embeddable (i.e. the corresponding Markov process has a continuous-time realization) is an open problem even for $4\times 4$ matrices. We study the embedding problem and rate identifiability for the K80 model of nucleotide substitution. For these $4\times 4$ matrices, we fully characterize the set of embeddable K80 Markov matrices  and the set of embeddable matrices for which rates are identifiable. In particular, we describe an open subset of embeddable matrices with non-identifiable rates. This set contains matrices with positive eigenvalues and also diagonal largest in column matrices, which might lead to consequences in parameter estimation in phylogenetics. Finally, we compute the relative volumes of embeddable K80 matrices and of embeddable matrices with identifiable rates. This study concludes the embedding problem for the more general model K81 and its submodels, which had been initiated by the last two authors in a separate work.

\emph{Keywords}. Nucleotide substitution model; Markov matrix; Markov generator; matrix logarithm; embedding problem; rate identifiability.
\end{abstract}

\section{Introduction}

Modeling {molecular substitutions} is the first and arguably the most fundamental step in phylogenetics and it is where the crucial hypotheses that should allow us to reconstruct the evolutionary history are to be assumed. There are several approaches one might take and it is natural to ask which conditions must be required for evolutionary models to fit the real evolutionary processes. For instance, under a Markov process, which are the plausible structures for {substitution} matrices in a given nucleotide or amino acid substitution model? Should any Markov matrix with such structure be considered as a biologically realistic {substitution} matrix?
Concerned with these and similar questions, we address the study of the embedding problem for the matrices of nucleotide substitution models. The final aim of this problem is to characterize those Markov matrices that are consistent with a homogeneous continuous-time approach of evolution. In other words, the aim is to decide whether a given Markov matrix in the model can be written as the exponential of some rate matrix. In this case, the Markov matrix is said to be \emph{embeddable}  (or to have a \emph{continuous realization} as in the book by \citealt{steelbook}), and the rate matrix is called a \emph{Markov generator} of the Markov matrix.

The motivation and applications of the embedding problem are diverse and include economics, social sciences, and molecular evolution  \cite[see][for example]{Israel, Singer,jia2,ChenJia,verbyla}. The problem has been addressed in a number of papers \citep[e.g.][]{Culver,Cuthbert72,Cuthbert73,Davies,goodman1970,Guerry2013,vanbrunt,Guerry2019}, {and it has been solved} for $2\times2$ and $3\times 3$ matrices by \citet{Israel,goodman1970,Guerry2019}. {However, }it is far from being solved for larger matrices with full generality. Because our motivation and interest arises from the study of mathematical properties of nucleotide substitution models, we focus on the particular case of $4\times 4$ matrices and, more specifically, on the Kimura models of nucleotide substitution \emph{K80} and \emph{K81}. These models were proposed by \citet{kimura80,kimura81} in the celebrated papers published almost 40 years ago. The main motivation was to include different parameters for different nucleotide substitution types, which according to biological data are not equally likely.
The {substitution} matrices of nucleotide replacements for the K81 model are Markov matrices with the following structure
$$\begin{small}
\begin{pmatrix}
 a	& b & c & d	\\
 b & a & d & c \\
 c & d & a	& b \\
 d & c & b & a
 \end{pmatrix},
\end{small}$$ {where the rows and columns are labelled by nucleotides adenine, guanine, cytosine and thymine (in this order) and the entry $(i,j)$ is the conditional probability that nucleotide $i$ is replaced by nucleotide $j$}. The K80 model is the submodel obtained when imposing that $c=d$. {In this case the parameter $c=d$ corresponds to transversions, which are substitutions from purines (adenine and guanine) to pyrimidines (cytosine and thymine) and viceversa, and parameter $b$ corresponds to transitions (substitutions within purines or within pyrimidines)}.
 When $b=c=d$, the resulting submodel is the Jukes-Cantor model JC69 introduced by \citet{JC69}.
%

In the phylogenetic setting, the embedding problem has been recently studied by \citet{ChenJia} and \citet{JJ}.
{In the first paper, the author addresses the embedding problem restricted to time-reversible Markov generators. }
 In the second, the authors characterize the embeddability of K81 matrices with different eigenvalues.
In a parallel work,  { \cite{Kaie} characterize} those embeddable matrices in a group-based model whose Markov generators satisfy the constraints of the model; this is usually referred to as \emph{model embeddability}.
%
{
Although the problem of model embeddability is natural from a continuous-time approach, in this paper we {do not impose time-reversibility or any model restriction on rates, as we} are more interested in determining whether a Markov process has a homogeneous continuous-time realization or not (independently of the structure of the instantaneous rate matrix). On the other hand, \citet{kimura81} initiates his argument in terms of differences of nucleotides between two homologous sequences, so his model for transition and transversion parameters is firstly justified in terms of probabilities rather than instantaneous mutation rates. As shown here, rates do not necessarily satisfy the same symmetries as probabilities do (see Example 4.3).}

Embeddability has not been yet  fully characterized for K81 matrices with a negative eigenvalue of even multiplicity and, in particular, it has not been solved for matrices within the K80 model.
In the present work we address the study of the repeated eigenvalues case to fill this gap and give a complete answer for the K80 model.
Moreover, together with the results by \citet{JJ}, our results fully solve the embedding problem for the K81 model. More precisely, the results on the embeddability of K81 matrices can be summarized as follows (see also Remark \ref{rmk_embedk81}):
\begin{itemize}
\item[(1)] for {generic} K81 Markov matrices (where $b,c,d$ are different), embeddability holds if and only if the principal logarithm $Log(M)$ is a rate matrix;
\item[(2)] for K81 Markov matrices with exaclty two equal off-diagonal entries, embeddability is characterized by the theorem below (and its analogous versions under permutations of rows and columns, { i.e., if $b=d$ one has to exchange the roles of $b$ and $c$, and if $b=c$, then $b$ should be replaced by $d$.});
\item[(3)] for the JC69 model, embeddability holds if and only if the determinant is positive.
\end{itemize}

Moreover, in this paper, we go further in the study of the K80 {model} and derive a criterion to determine whether the mutation rates of embeddable matrices are \textit{identifiable} and, if not, we determine whether there is a finite number of Markov generators. {One says that the rates of a Markov matrix $M$ are \textit{identifiable} there is a unique Markov generator for $M$. The concept of non-identifiability of rates was called \textit{nonunique mapping} in the paper by \citet{Kaehler2015}.}

All together, the main results of this work for the K80 model are summarized in the following result:

\begin{thm}\label{thm:entriesIdentiff}
For any K80 Markov matrix
$M =\begin{footnotesize}
	\begin{pmatrix}
		a & b & c & c\\
		b & a & c & c\\
		c & c & a & b\\
		c & c & b & a
	\end{pmatrix}
\end{footnotesize}$ { with $b\neq c$}, the following holds:
\begin{enumerate}[(a)]
	\item If ${2}c=1-2b$, then $M$ is not embeddable.
	
	\item If ${2}c<1-2b$, $M$ is embeddable if and only if $c \leq \sqrt{b} -b$. In this case, {$Log(M)$ is a Markov generator and}
	\begin{enumerate}[i)]
		\item if $c < \frac{1}{4}-\frac{e^{-4\pi}}{4}$ then the rates of $M$ are identifiable,
		\item if $c =\frac{1}{4}-\frac{e^{-4\pi}}{4}$ then $M$ has exactly $3$ Markov generators,
		\item if $c > \frac{1}{4}-\frac{e^{-4\pi}}{4}$ then $M$ has infinitely many Markov generators.
	\end{enumerate}

	\item If ${2}c>1-2b$, $M$ is embeddable if and only if $\frac{1}{4} - \frac{e^{-2\pi}}{4} \leq c \leq \sqrt{b} -b $. In this case, the rates of $M$ are not identifiable and
	\begin{enumerate}[i)]
		\item if $c =\frac{1}{4}-\frac{e^{-2\pi}}{4}$ then $M$ has exactly $2$ Markov generators.
		\item if $c > \frac{1}{4}-\frac{e^{-2\pi}}{4}$ then $M$ has infinitely many Markov generators.
	\end{enumerate}
\end{enumerate}
\end{thm}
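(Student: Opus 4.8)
The plan is to exploit the simultaneous diagonalization of all K80 (and K81) matrices by the Hadamard involution $H=\tfrac12\left(\begin{smallmatrix}1&1&1&1\\1&1&-1&-1\\1&-1&1&-1\\1&-1&-1&1\end{smallmatrix}\right)$, so that $M = H\,\mathrm{diag}(1,\lambda_1,\lambda_2,\lambda_2)\,H$ with $\lambda_1 = 1-4c$ a simple eigenvalue and $\lambda_2 = 1-2b-2c = a-b$ a double eigenvalue. The three cases of the statement correspond exactly to $\lambda_2=0$, $\lambda_2>0$ and $\lambda_2<0$. Case (a) is then immediate: a double zero eigenvalue forces $\det M=0$, whereas $\det\exp(Q)=e^{\mathrm{tr}\,Q}>0$ for every rate matrix $Q$, so $M$ cannot be embeddable.

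For (b) and (c) I would first describe all candidate generators. Any $Q$ with $\exp(Q)=M$ commutes with $M$ and hence preserves its eigenspaces; the zero-row-sum condition pins the component of $Q$ on $\langle(1,1,1,1)\rangle$ to $0$, the simple eigenvalue forces the component on the $\lambda_1$-eigenline to be $\log\lambda_1$ (which already requires $\lambda_1>0$, i.e. $c<1/4$), and on the $2$-dimensional $\lambda_2$-eigenspace $Q$ restricts to a real $2\times2$ block $B$ with $\exp(B)=\lambda_2 I_2$. Classifying such $B$ is the crux: one writes $B = \log|\lambda_2|\,I_2 + \phi J$ with $J^2=-I_2$, where $\phi$ ranges over $2\pi\ZZ$ when $\lambda_2>0$ (the value $\phi=0$ recovering the principal logarithm) and over $\pi(2\ZZ+1)$ when $\lambda_2<0$ (so that no rotation-free, hence no principal, real logarithm exists).

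Writing $Q=HLH$ and expanding, I would reduce the generator condition (nonnegativity of the off-diagonal entries) to numerical constraints in scalars $\gamma,\delta,\sigma$ built from $\phi$ and the entries of $J$: the transition entries give $|\delta|\le b_0$ and the four transversion entries $c_0\pm\gamma\pm\sigma$ give $|\gamma|+|\sigma|\le c_0$, where $b_0=\tfrac14\log\frac{\lambda_1}{\lambda_2^{2}}$ and $c_0=-\tfrac14\log\lambda_1$. The identity $\det J=1$ translates into $\sigma^2=\gamma^2+\delta^2+\phi^2/4$, so over all admissible $J$ the quantity $|\gamma|+|\sigma|$ is minimized, at $\gamma=\delta=0$, to the value $|\phi|/2$. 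Hence a generator with rotation $\phi$ exists if and only if $b_0\ge 0$ and $|\phi|\le 2c_0$. The first condition $b_0\ge 0$ is equivalent to $\lambda_1\ge\lambda_2^{2}$, i.e. to $(b+c)^2\le b$, i.e. to $c\le\sqrt b-b$; this is the embeddability criterion common to (b) and (c), and it shows that when it holds the $\phi=0$ branch (available precisely when $\lambda_2>0$) is indeed a Markov generator, namely $\Log(M)$.

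It then remains to feed in the admissible values of $\phi$ and count. Since $2c_0=-\tfrac12\log(1-4c)$, the branch $|\phi|=t$ is feasible iff $1-4c\le e^{-2t}$. For $\lambda_2>0$ the smallest nonzero branch is $|\phi|=2\pi$, feasible iff $c\ge\tfrac14-\tfrac{e^{-4\pi}}{4}$, giving case (b); for $\lambda_2<0$ the smallest branch is $|\phi|=\pi$, feasible iff $c\ge\tfrac14-\tfrac{e^{-2\pi}}{4}$, which together with $c\le\sqrt b-b$ yields the embeddability range in case (c). The generator counts follow from the geometry of the admissible set in $J$: at a threshold ($|\phi|=2c_0$) the constraint $|\gamma|+|\sigma|\le c_0$ is met only at the two minimizers $J=\pm J_0$ with $J_0=\left(\begin{smallmatrix}0&-1\\1&0\end{smallmatrix}\right)$, producing exactly two extra generators (hence $2$ in case (c)(i), and $2$ plus the principal one, i.e. $3$, in case (b)(ii)), whereas strictly inside ($|\phi|<2c_0$) an open set of $J$ is admissible, giving infinitely many. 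The main obstacle I anticipate is the precise branch classification of the $2\times2$ logarithm together with the boundary counting, i.e. verifying that $\pm J_0$ are the only admissible $J$ at the threshold and that they produce genuinely distinct rate matrices.
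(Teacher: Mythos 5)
Your proposal is correct and follows essentially the same route as the paper: diagonalize by the Hadamard matrix, reduce every real zero-row-sum logarithm to a $2\times2$ block $\log|\lambda_2|I_2+\phi J$ on the repeated eigenspace, translate nonnegativity of the off-diagonal entries into explicit inequalities, minimize over the admissible $J$ to locate the extremal generator, and count branches at and inside the threshold. Your parametrization by $J$ with $J^2=-I_2$ and the identity $\sigma^2=\gamma^2+\delta^2+\phi^2/4$ is equivalent to the paper's conjugation of the standard rotation generator by $A\in GL_2(\RR)$ normalized to $(\alpha,\beta)$ with minimum of $\bigl((1+|\alpha|)^2+\beta^2\bigr)/\beta$ at $(0,1)$, so the two arguments differ only in bookkeeping.
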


Case (b) above corresponds to the case of positive eigenvalues, for which we have that $M$ is embeddable if and only if its principal logarithm is a rate matrix (Corollary \ref{cor:CharOfEmb}). This result is in accordance with the well known case of different and real eigenvalues (see \citealt{Culver} and (1) above). Besides this,
the theorem also provides a full description of those embeddable K80 matrices with positive eigenvalues and non-identifiable rates \citep[see also][Example \ref{ex:pos_eig} and Section 3]{JJ}. Among these K80 matrices, there are some \emph{diagonal largest in column matrices}  (DLC), which are of relevance in parameter estimation in phylogenetics (see Remark \ref{DLC} and the Discussion).
As far as we are aware, this is the first result in this direction.
%


From the results obtained here, we recover the fact that among all possible Markov generators of an embeddable K80 matrix at most one keeps the K80 structure, which in turn coincides with the principal logarithm of the matrix \citep[see][]{JJ,ChenJia}.

Another goal of the paper is to quantify the difference between restricting to the homogeneous continuous-time models or considering any Markov matrix within the K80 model. In this direction, we compute the relative volume of embeddable K80 matrices inside the whole set of K80 Markov matrices. It turns out that embeddable matrices only account for about 35\% of all the {substitution} matrices. Similar computations exhibit that, although embeddable matrices with non-identifiable rates describe a set of positive measure within the K80 model (containing the embeddable matrices in case $(c)$ above), {99.99\% of embeddable K80 matrices have} identifiable rates, so the non-identifiability of rates is not representative of the general situation (see Table \ref{tab:volume} for the precise figures).

The organization of the paper is as follows. In Section \ref{sec:Preliminaries}
we introduce K80 matrices, state the embedding and rate-identifiability problem with precision and recall some known results that are relevant for posterior work. In Section \ref{sec:Embeddability} we characterize embeddability for K80 matrices both in terms of their eigenvalues (Corollary \ref{cor:VapsEmbed}) and their entries (Corollary \ref{cor:EntriesEmbed}), which together with the results already known for K81 matrices do fully solve the embedding problem for the K81 model. Later in Section \ref{sec:identifiability} we solve the rate identifiability problem by providing sufficient and necessary conditions for the rates to be identifiable (Proposition \ref{prop:identIFF}) and characterize rate identifiability in terms of the eigenvalues or the entries of the Markov matrix (Theorem \ref{thm:vapsIdentiff} and Corollary \ref{cor:entriesIdentiff2}). In Section \ref{sec:Volumes} we compute the relative volumes of embeddable and rate identifiable matrices within the K80 model and some biologically relevant subsets (see Table \ref{tab:volume}). Finally, in Section \ref{sec:Discussion} we discuss the implications and connections with other papers and possibilities for future work.

\section{Preliminaries}\label{sec:Preliminaries}

In this section we introduce the embedding problem for Markov matrices, {the nucleotide substitution model we work with, and a few known results related to the embedding problem needed for the sequel}.

\begin{defi} A matrix $M\in M_n(\RR)$ is said to be a \textit{Markov matrix} (or \emph{{substitution} matrix}) if its entries are positive or zero and its rows sum up to one. Similarly, $Q\in M_n(\RR)$ is said to be a \textit{rate matrix} if its rows sum to zero and { its off-diagonal entries are non-negative}.
\end{defi}

If $Q$ is a rate matrix, it is well-known that $e^{tQ}=\sum_{n\geq 0} \frac{t^n Q^n}{n!}$ is a Markov matrix for all $t\geq 0$, thus $Q$ is said to be a \emph{Markov generator} for $M=e^Q$ \citep{Davies}.
However, not every Markov matrix can be obtained this way. A Markov matrix $M$ is said to be \emph{embeddable} if $M=e^{Q}$ for some rate matrix $Q$. Characterizing which Markov matrices are embeddable is known as the \emph{embedding problem}. In terms of Markov processes, a Markov matrix is embeddable if it has a realization as a homogeneous continuous-time { Markov chain} ($M=e^{tQ}$ for some $t>0$ and some rate matrix $Q$).

We say that $Q\in M_n(\CC)$ is a \textit{logarithm} of a matrix $M$ if $e^Q=M$. 
As in the case of complex numbers, matrix logarithms are not unique \citep{Gantmacher}. Hence an embeddable matrix may admit more than one Markov generator. If an embeddable matrix has only one Markov generator we say that its \textit{rates are identifiable}. The \textit{identifiability problem} consists on deciding whether the rates are identifiable or not.

\begin{defi}\label{def:log}
Given $z\in\CC\setminus\{0\}$, let $\Arg z\in(-\pi,\pi]$ be its \textit{principal argument}.
For any given $k \in \ZZ$ we define the \textit{k-th determination of its logarithm} as $\log_k(z) = \log(|z|) + (\Arg z +2\pi k) i$. We denote by $\log(z)$ its \textit{principal logarithm} $\log_0(z)$. The \textit{principal logarithm} of an invertible matrix $M$, denoted as $\Log(M)$, is defined as the unique logarithm such that the imaginary part of its eigenvalues lies in the interval $(-\pi,\pi]$ \citep{Higham}. If $M$ diagonalizes, $M = P\;\dd(\lambda_1,\dots,\lambda_n)\;P^{-1}$, then its principal logarithm can be computed as $\Log(M)=P\;\dd(\log(\lambda_1),\dots,\log(\lambda_n))\;P^{-1}$.
\end{defi}

Our goal in this work is to deal with both the embedding problem and the identifiability problem for Kimura 2-substitution types model (K80 for short) \citep{kimura80} which is a submodel of Kimura 3-substitution types model K81 \citep{kimura81} and contains the most simple Jukes-Cantor model JC69 \citep{JC69}.

\begin{defi}
\label{def:K3Matrix}
A matrix $M\in M_4(\RR)$ is a \textit{K81 matrix} if it is of the form $$M=\begin{small}
\begin{pmatrix}
 a & b & c & d\\
 b & a & d & c \\
 c & d & a & b \\
 d & c & b & a \\
 \end{pmatrix}
\end{small}.$$  
For ease of reading, we will use the notation $M=K(a,b,c,d)$. A \textit{K80 matrix} is a K81 matrix with $c=d$. Similarly, a \textit{JC69 matrix} is a K81 matrix with $b=c=d$.\

When a K81 matrix is also a Markov matrix, we speak about \emph{K81 Markov matrices} (respectively \textit{K80 Markov matrices} and \textit{JC69 Markov matrices}). When the rows of a K81 matrix sum to zero and the off-diagonal entries are non-negative, then it is a \emph{rate K81 matrix} (and analogously for K80 and JC69).


{As a variation of the embedding problem  one may consider the problem of \emph{model embeddability} mentioned in the Introduction (which attempts to determine those Markov matrices in a certain model that have a Markov generator preserving the same identities that characterize the model). For instance, one may want to characterize those Markov matrices for which there is a K81 Markov generator.
For this model (and its submodels), model embeddability has been characterized in terms of inequalities in the eigenvalues of the Markov matrices \citep{JJ}. The model embeddability for general group based models has been studied by \citet{Kaie}. As we explained in the Introduction, we shall not focus in this problem in the present paper.}
%

\end{defi}

It is well known that all K81 matrices diagonalize under the following Hadamard matrix {\citep{EvansK3,Hendy1993}}: \begin{equation}\label{defS}
S: = \begin{footnotesize}
\begin{pmatrix}
1 & 1 & 1 & 1\\
1 & 1 & -1 & -1\\
1 & -1 & 1 & -1\\
1 & -1 & -1 & 1\\
\end{pmatrix}
\end{footnotesize}.
\end{equation}
Furthermore, it follows from a straightforward computation that a K81 matrix has the following eigenvalues: \begin{equation} \label{eq:eigenvalues}
\begin{matrix}
a+b+c+d &\, & x:=a+b-c-d  \\
 y:=a-b+c-d& \, & z:=a-b-c+d.
\end{matrix}
\end{equation}
In particular, any K81 matrix has real eigenvalues. If we deal with Markov matrices, then the eigenvalues become $1=a+b+c+d$, $x=1 -2c-2d$, $y= 1-2b-2d$, and $z=1-2b-2c$.
For a K80 Markov matrix the eigenvalues become $1$, $x=1-4c$, and {$y=z=1-2b-2c$} . Similarly, the eigenvalues of a JC69 Markov matrix are $1$ and {$x=y=z=1-4b$}.
{Note that this allows us to parametrize all Markov K81 matrices by the entries $b,c,d$ ($a=1-b-c-d$) or by their eigenvalues $x,y,z$. In particular, for the K80 model, the bijection between both spaces of parameters is given by:}
\begin{equation}\label{eq:paramBijection}
	\begin{matrix}
		\varphi: & \text{Entries}  & \longrightarrow & \text{Eigenvalues}\\
    		  & (b,c) & \longmapsto   & (1-4c \ , \ 1-2b-2c)\\
	\end{matrix} \hspace*{7mm}
	\begin{matrix}
		\varphi^{-1}: & \text{Eigenvalues} & \longrightarrow & \text{Entries}\\
    				& (x,y)   & \longmapsto	 & (\frac{1+x-2y}{4} \ , \ \frac{1-x}{4})\\
	\end{matrix}
\end{equation}

\begin{rk}\label{rmk_vaps}\rm It is worth noting that if $M$ is a K80 Markov matrix with eigenvalues {$1,x,y,y$} then
\begin{enumerate}[a)]
	\item $|x|\leq 1$, $|y|\leq 1$ (by Perron-Frobenius theorem).
  	\item If $y\neq x$, then $y\neq 1$. Indeed, the equality (\ref{eq:eigenvalues}) shows that the eigenvalues of $M=K(1-b-2c,b,c,c)$ are $x= 1-4{c}$ and $y=1-2b-2c $. Since $b,c \geq 0$, if $y=1$ it follows that {$b=c=0$} and $x=1$, which contradicts $y \neq x$. Therefore we have $y\neq 1$.
 	\item If $y\neq x$ and $x<0$, then $M$ does not have any real logarithm \citep[see][]{Culver}.
 	\item If $x$ or $y$ are zero, then $M$ is not embeddable because it has zero determinant ({note that $\det e^Q=e^{\mathrm{tr}(Q)}> 0$}).
 	\item {If $x=y$ then $M$ is a JC69 matrix. In this case, $M$ is embeddable if and only if $x>0$ \citep{JJ}}.
\end{enumerate}
\end{rk}

It is known that if a matrix has determinant close to $1$ then the principal logarithm is the only possible real logarithm of that matrix \citep{Israel,Cuthbert72,Cuthbert73,Singer}. The same holds for matrices with distinct real eigenvalues \citep{Culver}. The next theorem provides sufficient and necessary conditions for the principal logarithm of a K81 Markov matrix to be a rate matrix (\citealp{JJ}, see also example 4.5 by \citealt{Kaie}). In particular, the result below solves the embedding and identifiability problems in the K81 model except for those matrices with repeated eigenvalues. {At the same time, this result solves the model embeddability problem for the K81 model and its submodels.}

\begin{thm}[\citealt{JJ}, Corollary 3.5]\label{thm:LogK3Embed}
Let $M$ be a K81 Markov matrix with eigenvalues $1,x,y,z$.
Then,
	\begin{enumerate}[i)]
		\item $\Log(M) = S \; \dd(0,\log(x),\log(y),\log(z)) \; S^{-1}$ is a K81 matrix. Furthermore, it is the only logarithm of $M$ that is itself a K81 matrix.

 		\item $\Log(M)$ is a rate matrix if and only if
 			\begin{eqnarray}
				x,y,z >0, \qquad \label{ineq}
				x\geq y z, \qquad
				y\geq x z, \qquad
				z\geq x y.
 			\end{eqnarray}
	\end{enumerate}
\end{thm}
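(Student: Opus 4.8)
The plan is to exploit the fact, recalled above, that every K81 matrix is diagonalized by the Hadamard matrix $S$. The structural observation driving the whole argument is that conjugation $N\mapsto S^{-1}NS$ restricts to a linear isomorphism between the space of K81 matrices and the space of diagonal matrices: by the eigenvalue relations (\ref{eq:eigenvalues}) the entries $a,b,c,d$ are recovered from the four eigenvalues through an invertible linear system (the inverse Hadamard transform). In particular, $S\,\dd(\mu_0,\mu_1,\mu_2,\mu_3)\,S^{-1}$ has K81 form for every choice of diagonal entries, and a real K81 matrix has real eigenvalues.

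For part (i), I would start from $M = S\,\dd(1,x,y,z)\,S^{-1}$ and apply the principal-logarithm formula for diagonalizable matrices from Definition \ref{def:log}, which gives $\Log(M) = S\,\dd(0,\log x,\log y,\log z)\,S^{-1}$; by the isomorphism above this matrix has K81 form. For uniqueness, let $Q$ be any K81 matrix with $e^Q = M$. Since $Q$ is diagonalized by $S$, write $Q = S\,\dd(\mu_0,\mu_1,\mu_2,\mu_3)\,S^{-1}$, where the $\mu_i$ are real because $Q$ is a real K81 matrix. Exponentiating yields $\dd(e^{\mu_0},e^{\mu_1},e^{\mu_2},e^{\mu_3}) = \dd(1,x,y,z)$, so each $e^{\mu_i}$ equals the corresponding eigenvalue; as the $\mu_i$ are real this forces $x,y,z>0$ and $\mu_i=\log(\lambda_i)$, hence $Q=\Log(M)$. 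Thus $\Log(M)$ is the only K81 logarithm.

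For part (ii), I would write $\Log(M)=K(\alpha,\beta,\gamma,\delta)$ and read off its off-diagonal entries by inverting (\ref{eq:eigenvalues}) with eigenvalues $0,\log x,\log y,\log z$, obtaining $\beta=\tfrac14\log\tfrac{x}{yz}$, $\gamma=\tfrac14\log\tfrac{y}{xz}$, $\delta=\tfrac14\log\tfrac{z}{xy}$. The row-sum-zero requirement for a rate matrix is automatic, since the all-ones vector is an eigenvector of $\Log(M)$ with eigenvalue $\log 1=0$; hence the only constraints are non-negativity of $\beta,\gamma,\delta$. Because $\Log(M)$ must first be a real matrix, positivity $x,y,z>0$ is necessary; granting it, $\beta\geq 0\iff x\geq yz$, and symmetrically $\gamma\geq 0\iff y\geq xz$ and $\delta\geq 0\iff z\geq xy$, which is precisely the claimed system of inequalities.

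The step I expect to be most delicate is the reality bookkeeping in part (i): one must track that a real K81 logarithm exists exactly when all eigenvalues are positive, so that the uniqueness claim and the identity $\Log(M)=S\,\dd(0,\log x,\log y,\log z)\,S^{-1}$ remain consistent with $\Log(M)$ being a genuine (real) rate matrix rather than a complex matrix of K81 form. Everything else reduces to the invertible linear algebra of the Hadamard transform together with a direct translation of the rate-matrix axioms into inequalities among the eigenvalues.
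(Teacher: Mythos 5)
Your proposal is correct. Note that the paper itself gives no proof of this statement: it is imported verbatim as Corollary 3.5 of the cited work by Roca-Lacostena and Fern\'andez-S\'anchez, so there is nothing internal to compare against. Your argument --- using that conjugation by the Hadamard matrix $S$ is a linear isomorphism between K81 matrices and diagonal matrices, deducing uniqueness of the K81 logarithm from the fact that a real K81 matrix has real eigenvalues $\mu_i$ with $e^{\mu_i}=\lambda_i$, and reading off the off-diagonal entries $\tfrac14\log\tfrac{x}{yz}$, $\tfrac14\log\tfrac{y}{xz}$, $\tfrac14\log\tfrac{z}{xy}$ to get the stated inequalities --- is exactly the standard route and is sound, including your handling of the reality constraint that forces $x,y,z>0$.
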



If a Markov matrix has a repeated eigenvalue (e.g. K80 and JC69 matrices), then it {may have} infinitely many real logarithms \citep{Culver} and hence we must check if any of them is a rate matrix before deciding that such a matrix is not embeddable. Nonetheless, it is known that any logarithm (including non-real logarithms) of a given matrix $M$ can be obtained as $\Log(M) + L$ where $L$ is one out of infinite logarithms of the identity that commute with $M$ \citep{Higham}.\\

{The following notation will be used throughout the paper.}
{
\emph{Note.}  
$Id_n$ denotes the identity matrix of order $n$. We write $GL_n(\mathbb{K})$ for the space of $n\times n$ invertible matrices with entries in $\mathbb{K}= \RR$ or $\CC$. Given a matrix $M$, we define the \emph{commutant} of $M$, $Comm^*(M)$, as the set of invertible complex matrices that commute with $M$.

}
\begin{rk}\label{rk_comm}
\rm {If $D$ is a diagonal matrix,  $D=\dd(\overbrace{\lambda_1,\dots,\lambda_1}^{m_1},\overbrace{\lambda_2,\dots,\lambda_2}^{m_2},\dots,\overbrace{\lambda_n,\dots, \lambda_n}^{m_n})$, then $Comm^*(D)$ consists on all the block-diagonal matrices whose blocks are taken from the corresponding $GL_{m_i}(\CC)$. In particular, if all $\lambda_i$ are different, $Comm^*(D)$ is the set of invertible diagonal matrices.
%
%
Moreover, if $M=P M' P^{-1}$, then $Comm^*(M)$ is formed by all matrices that arise as $P\, U\, P^{-1}$ with $U\in Comm^*(M')$.
}
\end{rk}

Using that all K80 matrices can be diagonalized by $S$,  {immediate application of Theorem 1.27 by \citet{Higham} allows us to enumerate all the logarithms (real or not) of any K80 Markov matrix:}


\begin{thm}\label{Thm:EnumLogK2} Given a K80 Markov matrix $M$ with eigenvalues $1$, $x$, $y$, $y$ all (complex) solutions to $exp(Q) = M$ are given by:
 \begin{equation}\label{eq:HighEnum}
	Q= S\; U \; \dd\big(\log_{k_1}(1),\log_{k_2}(x),\log_{k_3}(y),\log_{k_4}(y) \big) \; U^{-1} \; S^{-1}
 \end{equation}
where $k_i \in \ZZ$, $S$ is defined in \eqref{defS} and $U \in Comm^*\big(\dd(1,x,y,y)\big)$.
\end{thm}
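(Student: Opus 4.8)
The plan is to invoke Higham's Theorem 1.27, which enumerates all logarithms of a diagonalizable matrix, and then specialize it to the K80 setting using the fact that every K80 matrix is diagonalized by the fixed Hadamard matrix $S$. The starting point is the observation, already recorded in the excerpt, that a K80 Markov matrix $M$ satisfies $M = S\,\dd(1,x,y,y)\,S^{-1}$, so that computing logarithms of $M$ reduces to computing logarithms of the diagonal matrix $D := \dd(1,x,y,y)$.

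First I would recall the general principle (Higham, Theorem 1.27, together with Remark~\ref{rk_comm}) that every solution $Q$ of $e^Q = D$ for a diagonalizable $D = P\,\dd(\lambda_1,\dots,\lambda_n)\,P^{-1}$ has the form $Q = P\,W\,\dd(\log_{k_1}(\lambda_1),\dots,\log_{k_n}(\lambda_n))\,W^{-1}\,P^{-1}$, where the $k_i$ range over $\ZZ$ and $W$ ranges over the invertible matrices commuting with $\dd(\lambda_1,\dots,\lambda_n)$. The subtlety — and the reason the conjugating factor $W$ (our $U$) appears at all — is the repeated eigenvalue: because $y$ has multiplicity two, the logarithm is not obtained merely by taking a determination $\log_{k_i}$ entrywise on the diagonal. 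Different branches $\log_{k_3}(y)$ and $\log_{k_4}(y)$ may be assigned to the two copies of $y$, and the eigenspace of $y$ being two-dimensional means one may further conjugate within that eigenspace by any element of $GL_2(\CC)$. This is exactly what the freedom $U \in Comm^*(\dd(1,x,y,y))$ encodes: by Remark~\ref{rk_comm}, since the eigenvalues $1$, $x$, $y$, $y$ have multiplicities $1$, $1$, $2$, such $U$ are block-diagonal of shape $1+1+2$, trivial on the simple eigenvalues and an arbitrary invertible $2\times 2$ block on the repeated one.

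The key step is then to compose the two changes of basis. Writing $M = S\,D\,S^{-1}$, any matrix commuting with $M$ is of the form $S\,U\,S^{-1}$ with $U$ commuting with $D$ (the second part of Remark~\ref{rk_comm}), and Higham's theorem applied with $P = S$ gives
\begin{equation*}
	Q = S\; U \; \dd\big(\log_{k_1}(1),\log_{k_2}(x),\log_{k_3}(y),\log_{k_4}(y)\big) \; U^{-1} \; S^{-1},
\end{equation*}
which is precisely \eqref{eq:HighEnum}. One should check that the enumeration is exhaustive and well-posed: exhaustiveness is guaranteed by Higham's theorem (every logarithm arises this way), while the fact that the simple eigenvalues $1$ and $x$ contribute no conjugation freedom is consistent with $U$ being forced to act as a scalar on their one-dimensional eigenspaces, so those blocks commute through and may be absorbed without loss of generality.

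The main obstacle I anticipate is essentially bookkeeping rather than depth: one must verify that Higham's hypotheses apply, namely that $M$ is nonsingular (so its logarithms exist), which holds whenever $x \neq 0$ and $y \neq 0$ — and indeed by Remark~\ref{rmk_vaps}(d) a K80 Markov matrix with $x = 0$ or $y = 0$ is not embeddable, so the interesting cases have $M$ invertible. One also has to confirm that $S$ is a genuine diagonalizing matrix for $M$ (invertible, with $S^{-1} = \tfrac14 S$), which is the standard Hadamard fact cited via \citep{EvansK3,Hendy1993}. Since the theorem is an immediate translation of a known general result to the concrete diagonalizing basis $S$, the proof is short and the only care needed is to present the correspondence between the abstract commutant and the explicit block structure cleanly.
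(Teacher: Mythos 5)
Your proposal is correct and follows essentially the same route as the paper, which likewise obtains the statement as an immediate application of Higham's Theorem 1.27 combined with the diagonalization of K80 matrices by the Hadamard matrix $S$ and the block description of the commutant in Remark \ref{rk_comm}. Your additional remarks on invertibility and on the origin of the conjugation freedom from the repeated eigenvalue $y$ are accurate elaborations of what the paper leaves implicit.
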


{
%
The previous theorem provides the eigendecomposition of any complex logarithm of a K80 Markov $M$ while making explicit the connection with the commutant of $M$ (see Remark \ref{rk_comm}).
%
In the next section, we will consider only \emph{real} logarithms, so we will need to pay special attention to the corresponding restrictions on their eigenvalues and  eigenvectors.
}

\section{Embeddability of K80 Markov matrices}\label{sec:Embeddability}



In this section we characterize all the real logarithms with rows summing to zero of K80 Markov matrices and, as a consequence, we are able to provide sufficient and necessary conditions for such a matrix to be embeddable. {These} conditions are given in terms of some inequalities involving either the eigenvalues of the matrix (Corollary \ref{cor:VapsEmbed}) or its entries (Corollary \ref{cor:EntriesEmbed}). 

Theorem \ref{Thm:EnumLogK2} allows us to compute all the logarithms of any given K80 matrix by using the principal logarithm and logarithms of the identity. The main issue with the description given by this theorem is that we are interested only in logarithms that satisfy rate matrices constraints. The forthcoming Proposition \ref{prop:L1L2} {characterizes all the real logarithms of a K81 matrix whose rows sum to 0. To this end we introduce the following matrices:}

\begin{defi}\label{def:L1L2}
Let $M$ be a K80 Markov matrix with eigenvalues $1$, $x$, $y$, $y$ satisfying $x >0$, {$y \neq x$ and $y\neq0$}. Given $k\in \ZZ$ and $A\in GL_2(\RR)$ we introduce the following notation:
	\begin{itemize}
		\item $L_0 := S\;\dd \big(0,\log(x),\log|y|,\log|y| \big) \; S^{-1}$.
		\item $L_1(A) := \big( S\; \dd(Id_2,A) \big) \; \dd \left(\begin{footnotesize}
			\begin{pmatrix}
				0 & 0\\
				0 & 0\\
			\end{pmatrix}\end{footnotesize},
			\begin{footnotesize} \begin{pmatrix}
				0 & 1\\
				-1 & 0\\
			\end{pmatrix}\end{footnotesize}
			\right) \big( \dd\big(Id_2,A^{-1}\big) \; S^{-1} \big)$.

		\item $Q(k,A) := L_0 + \big(2\pi k +\Arg y \big) L_1(A)$.
	\end{itemize}
\end{defi}

{Although} these matrices depend on $M$, for ease of reading we decided not to reflect it in the notation. {Note that $\Arg{y}$ is either $0$ or $\pi$ because $y$ is a real number (see (\ref{eq:eigenvalues})). Also note that if $y>0$, then $L_0=Log(M)$ and $Q(0,A)=Log(M)$ for all $A\in GL_2(\RR)$.}


\begin{prop}\label{prop:L1L2}
Let $M$ be an invertible K80 Markov matrix with eigenvalues $1$, $x$, $y$, $y$ satisfying $x >0 $ and $y \neq x$. Then for any matrix $Q\in M_4(\RR)$ the following are equivalent:
	\begin{enumerate}[i)]
		\item $Q$ is a real logarithm of $M$ whose rows sum to 0.
		\item $Q = Q(k,A)$ for some $k\in \ZZ$ and $A\in GL_2(\RR)$.
	\end{enumerate}
\end{prop}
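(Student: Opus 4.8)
The plan is to prove the two implications separately. For (ii)$\Rightarrow$(i) I would verify by direct computation that every $Q(k,A)$ is a real matrix with zero row sums satisfying $e^{Q(k,A)}=M$. For (i)$\Rightarrow$(ii) I would start from the complete list of complex logarithms in Theorem~\ref{Thm:EnumLogK2} and cut it down by imposing the two constraints ``$Q$ real'' and ``rows of $Q$ sum to $0$''. Throughout I abbreviate $R:=\left(\begin{smallmatrix}0&1\\-1&0\end{smallmatrix}\right)$, whose eigenvalues are $\pm i$, and I would record the block forms $L_0=S\,\dd\big(\dd(0,\log x),\ \log|y|\,Id_2\big)\,S^{-1}$ and $L_1(A)=S\,\dd\big(0_2,\ ARA^{-1}\big)\,S^{-1}$, where the last $2\times2$ block lives on the $y$-eigenspace of $M$.

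For (ii)$\Rightarrow$(i): reality is immediate since $S$, $A$, $R$ and the scalars involved are real. Because $L_0$ acts as the scalar $\log|y|$ on the $y$-block while $L_1(A)$ vanishes off it, these two matrices commute, so $e^{Q(k,A)}=e^{L_0}\,e^{(2\pi k+\Arg{y})L_1(A)}$. Here $e^{L_0}=S\,\dd(1,x,|y|,|y|)\,S^{-1}$, and using $e^{\theta R}=\left(\begin{smallmatrix}\cos\theta&\sin\theta\\-\sin\theta&\cos\theta\end{smallmatrix}\right)$ together with $\Arg{y}\in\{0,\pi\}$ I get $e^{(2\pi k+\Arg{y})R}=\cos(\Arg{y})\,Id_2=\mathrm{sgn}(y)\,Id_2$, hence $e^{(2\pi k+\Arg{y})L_1(A)}=S\,\dd(Id_2,\mathrm{sgn}(y)\,Id_2)\,S^{-1}$. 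Multiplying the two exponentials turns the two copies of $|y|$ into $y$, giving $e^{Q(k,A)}=S\,\dd(1,x,y,y)\,S^{-1}=M$. Finally $\mathbf{1}=S e_1$ lies in the kernel of both $L_0$ and $L_1(A)$ (its first coordinate sits in their zero blocks), so $Q(k,A)\,\mathbf{1}=0$, i.e. the rows sum to $0$.

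For (i)$\Rightarrow$(ii): by Theorem~\ref{Thm:EnumLogK2} I would write $Q=S\,U\,D\,U^{-1}\,S^{-1}$ with $D=\dd\big(2\pi k_1 i,\ \log x+2\pi k_2 i,\ \log|y|+(\Arg{y}+2\pi k_3)i,\ \log|y|+(\Arg{y}+2\pi k_4)i\big)$ and, by Remark~\ref{rk_comm} (the eigenvalues $1,x,y$ being distinct, with $y\neq1$ by Remark~\ref{rmk_vaps}, so $1$ and $x$ simple and $y$ double), $U=\dd(u_1,u_2,B)$ with $u_1,u_2\in\CC^{\ast}$ and $B\in GL_2(\CC)$. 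Conjugation by the scalars is trivial, so the $1\times1$ blocks of $S^{-1}QS$ are $2\pi k_1 i$ and $\log x+2\pi k_2 i$; reality forces $k_2=0$, while the zero-row-sum hypothesis forces the Perron block to vanish, i.e. $k_1=0$ (this is exactly where zero row sums are genuinely needed beyond reality: in the boundary case $x=1$ the eigenvalue $1$ doubles and reality alone would permit a $\pm2\pi i$ rotation, which zero row sums then rule out). It remains to treat the $y$-block $\log|y|\,Id_2+i\,BWB^{-1}$, where $W:=\dd(w_3,w_4)$ and $w_j:=\Arg{y}+2\pi k_j$. Reality makes $iBWB^{-1}$ a real matrix with eigenvalues $iw_3,iw_4$; since the spectrum of a real matrix is closed under conjugation, $\{iw_3,iw_4\}=\{-iw_3,-iw_4\}$, which forces $w_4=-w_3$, so $W=w_3\,\dd(1,-1)$.

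The main obstacle will be to replace the complex conjugator $B$ by a genuinely real $A$. Set $C:=w_3\,i\,B\,\dd(1,-1)\,B^{-1}$, which is real (because $Q$ is) and has eigenvalues $\pm i w_3$, the eigenvalues of $i\,\dd(1,-1)=\dd(i,-i)$ scaled by $w_3$; these are exactly the eigenvalues of $w_3 R$. Since $C$ and $w_3 R$ are both real and similar over $\CC$ (distinct conjugate eigenvalues), they are similar over $\RR$, so $C=w_3\,A R A^{-1}$ for some $A\in GL_2(\RR)$; in the degenerate case $w_3=0$ one has $C=0$ and $Q=L_0=Q(0,A)$. Because $w_3=\Arg{y}+2\pi k_3$, the $y$-block of $Q$ equals $\log|y|\,Id_2+(2\pi k_3+\Arg{y})ARA^{-1}$, and comparison with the block forms of $L_0$ and $L_1(A)$ yields $Q=L_0+(2\pi k_3+\Arg{y})L_1(A)=Q(k_3,A)$, as required. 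The only nonroutine ingredients are this $\CC$-to-$\RR$ similarity passage and the observation that zero row sums, rather than mere reality, are what pin down the Perron determination.
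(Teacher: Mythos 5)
Your proof is correct. It shares the overall skeleton of the paper's argument for the hard direction --- start from the enumeration in Theorem~\ref{Thm:EnumLogK2}, use the block structure of the commutant, and let reality together with the zero--row--sum condition pin down $k_1=k_2=0$ and force the two determinations on the $y$-block to be opposite --- but it executes the two key steps by genuinely different means. For the realification of the $y$-block, the paper writes the columns of $U_2$ explicitly as a conjugate pair $(v,\lambda\bar v)$ with $v=(\alpha+\beta i,\gamma+\delta i)^t$ and conjugates by the fixed matrix $R$ of \eqref{eq:R} to extract $A=\left(\begin{smallmatrix}\alpha&\beta\\ \gamma&\delta\end{smallmatrix}\right)$; you instead note that the real matrix $C$ has the same simple spectrum $\pm iw_3$ as $w_3\left(\begin{smallmatrix}0&1\\-1&0\end{smallmatrix}\right)$ and invoke the fact that two real matrices similar over $\CC$ are similar over $\RR$. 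Your route is cleaner and less computational, at the price of being non-constructive: the paper's explicit coordinates are what later feed Remark~\ref{rk:VEPs} and the eigenvector formulas used in the proof of Theorem~\ref{Thm:QAlphaBeta}. For (ii)$\Rightarrow$(i) the paper re-assembles $Q(k,A)$ into the eigendecomposition form of Theorem~\ref{Thm:EnumLogK2} and concludes it is a logarithm by that theorem, whereas you exponentiate directly using the commutativity of $L_0$ and $L_1(A)$ and $e^{\theta R}=\cos\theta\, Id_2+\sin\theta\, R$; both verifications are valid and yours is more self-contained.

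One small point to tighten: the hypotheses do not exclude $x=1$ (take $c=0$ and $b\notin\{0,1/2\}$), so the eigenvalues $1$ and $x$ need not be simple and $U$ need not have the form $\dd(u_1,u_2,B)$; following Remark~\ref{rk_comm} one should allow $U=\dd(U_1,B)$ with $U_1\in Comm^*\big(\dd(1,x)\big)$, as the paper does. Your parenthetical already handles the eigenvalue determinations correctly in this boundary case, and since the Perron block is then forced to be the zero matrix the conclusion is unaffected, but the stated commutant decomposition should be adjusted accordingly.
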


It is worth pointing out that the previous definition and proposition can be generalized to $4\times 4$ diagonalizable matrices. We do not state them in this full generality in order to make the notation and the proof more readable. {The proof of this result relies on  Theorem \ref{Thm:EnumLogK2} and the study of the commutant of K80 matrices}.

\begin{proof} Note that Remark \ref{rmk_vaps} shows that $y\neq 1$.

\noindent
$i) \Rightarrow ii)$  Assume that $Q$ is a real logarithm of $M$ whose rows sum to 0. According to Theorem \ref{Thm:EnumLogK2} it holds that
\begin{equation}\label{eq:AllLogs}
		Q = S\; U \; \dd\big(\log_{k_1}(1),\log_{k_2}(x),\log_{k_3}(y),\log_{k_4}(y) \big) \; U^{-1} \; S^{-1}
\end{equation}
for some $k_i \in \ZZ$ and some $U \in GL_4(\CC)$ that commutes with $\dd\big(1,x,y,y\big)$.

{Since the rows of $Q$ sum to $0$, we have that $(1,1,1,1)$ is an eigenvector of $Q$ with eigenvalue $0$. Moreover, since the matrix $Q$ is real, complex non-real eigenvalues of $Q$ come in conjugate pairs. Therefore, as $y\neq x$ and $y\neq 1$, $\log_{k_1}(1)$, $\log_{k_2}(x)$ are real and at least one of them is equal to $0$. In particular, $k_1=k_2=0$.}
%
%


Noting that $y\neq 1$, $y\neq x$ and using {Remark \ref{rk_comm}} we get:
\begin{center}
 $U=\begin{pmatrix}
		U_1 & 0\\
		0 & U_2\\
	\end{pmatrix}$
with $U_1 \in Comm^*\big( \dd(1,x)\big)$ and $U_2 \in GL_2(\CC)$. \end{center}

Now, as $U_1$ commutes with $\dd(1,x)$, it does also commute with $\dd\big(\log_0(1),\log_0(x)\big)=\dd\big(0,\log(x)\big)$ and hence:

\begin{eqnarray}
			S^{-1}\; Q \; S & = &  U \;
			\begin{small} \begin{pmatrix}
				\dd\big(0,\log(x)\big) & 0\\
				0 & \dd\big(\log_{k_3}(y),\log_{k_4}(y)\big)\\
			\end{pmatrix} \end{small} \;
			U^{-1} \nonumber\\
			 & = &\begin{small} \begin{pmatrix}
				U_1 \; \dd\big(0,\log(x) \big) \; U_1^{-1}& 0\\
				0 & U_2 \; \dd\big(\log_{k_3}(y), \log_{k_4}(y) \big) \; U_2^{-1}\\
			\end{pmatrix} \end{small}\nonumber\\
			 & = &\begin{small} \begin{pmatrix}
				\; \dd\big(0,\log(x) \big) \; & 0\\
				0 & U_2 \; \dd\big(\log_{k_3}(y),\log_{k_4}(y) \big) \; U_2^{-1}\\
			\end{pmatrix} \end{small}
			\label{eq_U}
			\, .
	\end{eqnarray}

{Since non-real eigenvalues and eigenvectors of $Q$ appear in complex conjugate pairs}, either {$\log_{k_3}(y)$, $\log_{k_4}(y)\in \RR$ or} $\log_{k_4}(y)= \overline{\log_{k_3}(y)}$ and the vector columns of $U_2$, namely $v$ and $w$, must be complex vectors satisfying $w = \lambda \overline{v}$ for some $\lambda \in \CC\setminus \{0\}$.
{In the first case we have that $k_3=k_4=0$ and $y>0$ so that $U_2$ commutes with $\dd\big(\log_{0}(y),\log_{0}(y) \big)$. Note that in this case we have $Q=Log(M)$ and coincides with $Q(0,A)$ for any $A$. For the second case we get:}
\begin{center}
	$U_2 = \big(v\; w \big) =
	\begin{pmatrix}
		\alpha+\beta i & \alpha-\beta i \\
		\gamma+\delta i & \gamma - \delta i\\
	\end{pmatrix} \; \dd(1,\lambda)$
\end{center}
for some $\lambda\in \CC\setminus \{0\}$, $\alpha,\beta,\gamma,\delta\in \RR$ such that $\det(U_2)= {2}\lambda(\beta\gamma-\alpha\delta)i \neq 0$.

Since the matrix $\dd(1,\lambda)$ commutes with any diagonal matrix {(and in particular with $\dd(\log_{k_3}(y),\overline{\log_{k_3}(y)})$, we obtain that \eqref{eq_U} is equivalent to the following equality:
$$S^{-1}QS=\begin{footnotesize} \begin{pmatrix}
			1 & 0 & 0 &0\\
			0 & 1 & 0 &0\\
			0 & 0 & \alpha+\beta i & \alpha-\beta i \\
			0 & 0 & \gamma+\delta i & \gamma - \delta i\\
	\end{pmatrix}\end{footnotesize} \begin{pmatrix}
				\; \dd\big(0,\log(x) \big) \; & 0\\
				0 &  \dd\big(\log_{k_3}(y),\overline{\log_{k_3}(y)} \big) \\
			\end{pmatrix}
			\begin{footnotesize}
			\begin{pmatrix}
			1 & 0 & 0 &0\\
			0 & 1 & 0 &0\\
			0 & 0 & \alpha+\beta i & \alpha-\beta i \\
			0 & 0 & \gamma+\delta i & \gamma - \delta i\\
	\end{pmatrix}^{-1}
	\end{footnotesize}.
	$$
	}
	
	Furthermore, by considering the following invertible matrix
\begin{equation}\label{eq:R}
	R = \begin{footnotesize} \begin{pmatrix}
		1 & 0 & 0 & 0\\
		0 & 1 & 0 & 0\\
		0 & 0 & 1 & 1\\
		0 & 0 & i &-i\\
	\end{pmatrix} \end{footnotesize} ,
\end{equation} we have
\begin{center}
	\begin{tabular}{r l}
		$ Q $&$= S\; \big(U\; R^{-1} \big) \big( \; R \; \dd\big(0,\log(x),\log_{k_3}(y),\overline{\log_{k_3}(y)} \big) \; R^{-1} \big) \big( R \; U^{-1} \big) S^{-1} $ \vspace{3mm}\\
 		&$=S\; \begin{footnotesize} \begin{pmatrix}
			1 & 0 & 0 & 0\\
			0 & 1 & 0 & 0\\
			0 & 0 & \alpha & \beta\\
			0 & 0 & \gamma &\delta\\
		\end{pmatrix} \; \begin{pmatrix}
			0 & 0 & 0 & 0\\
			0 & \log(x) & 0 & 0\\
			0 & 0 & \log|y|& 2\pi k_3 + \Arg y\\
			0 & 0 & -\big(2\pi k_3 + \Arg y \big) & \log|y|\\
		\end{pmatrix} \;\begin{pmatrix}
			1 & 0 & 0 & 0\\
			0 & 1 & 0 & 0\\
			0 & 0 & \alpha & \beta\\
			0 & 0 & \gamma &\delta\\
		\end{pmatrix}^{-1}\end{footnotesize}
		\; S^{-1} .$\\
	\end{tabular}
\end{center}
Using the distributive property and the fact that the invertible matrix
$A:= \begin{pmatrix}
	\alpha & \beta\\
	\gamma & \delta\\
\end{pmatrix}\in GL_2(\RR)$
commutes with $\dd(\log|y|,\log|y|)$ we can rewrite $Q$ as:
\begin{center}
	\begin{scriptsize}
		$S \; \begin{pmatrix}
			0 & 0 & 0 & 0\\
			0 & \log(x) & 0 & 0\\
			0 & 0 & \log|y|& 0\\
			0 & 0 & 0 & \log|y|\\
		\end{pmatrix} \;S^{-1} + \big(2\pi k_3 + \Arg y \big) \;S\; \begin{pmatrix}
			1 & 0 & 0 & 0\\
			0 & 1 & 0 & 0\\
			0 & 0 & \alpha & \beta\\
			0 & 0 & \gamma &\delta\\
		\end{pmatrix} \; \begin{pmatrix}
			0 & 0 & 0 & 0\\
			0 & 0 & 0 & 0\\
			0 & 0 & 0& 1\\
			0 & 0 & -1 & 0\\
		\end{pmatrix} \;\begin{pmatrix}
			1 & 0 & 0 & 0\\
			0 & 1 & 0 & 0\\
			0 & 0 & \alpha & \beta\\
			0 & 0 & \gamma &\delta\\
		\end{pmatrix}^{-1} \; S^{-1} .$
	\end{scriptsize}
\end{center}
That is, $Q= L_0+\big( 2\pi k_3 +\Arg y \big) L_1(A)$ and this concludes this part of the proof.\\

\noindent
$ ii) \Rightarrow i)$
By definition $Q(k,A)$ is a real matrix. Furthermore, $(1,1,1,1)$ is an eigenvector with eigenvalue $0$ for both $L_0$ and $L_1$ thus the rows of $Q(k,A)$ sum to $0$. A straightforward computation shows that $exp(Q(k,A))=M$ for any $A \in GL_2(\RR)$ and $k\in \ZZ$. {Indeed, note that the matrix $R$ defined in \eqref{eq:R} and the matrix $\dd(Id_2,A)$ both commute with $\dd(a,b,c,c)$ for any $a,b,c\in\RR$ so, the matrix defined as $U:= \dd(Id_2,A)\; R$ commutes with $\dd(0,\log(x),\log|y|,\log|y|)$. In particular we can write $L_0$ as}
	$$L_0 = S\; U \; \dd \big(0,\log(x),\log|y|,\log|y| \big) \; U^{-1} \; S^{-1} \, .$$
{On the other hand, we have}
	$$L_1(A) = S\; (U\; R^{-1}) \; \dd \left(
	\begin{footnotesize}\begin{pmatrix}
		0 & 0\\
		0 & 0\\
	\end{pmatrix} \end{footnotesize} ,
	\begin{footnotesize} \begin{pmatrix}
		0 & 1\\
		-1 & 0\\
	\end{pmatrix} \end{footnotesize} \right)
	(U\; R^{-1})^{{-1}} \; S^{-1} = (S\;U)\;\dd(0,0,i,-i) (U^{-1} \; S^{-1}).$$
Hence,
\begin{equation}\label{eq_VEPs}
Q(k,A) = (S\;U)\;\dd(\log(1),\log(x), \log_k(y),\log_l(y)) \;(U^{-1} \; S^{-1})
\end{equation}
with $l=-k$ if $y>0$ and $l=-k-1$ if $y<0$. {As $U$ also commutes with $\dd(1,x,y,y)$, the matrix $Q(k,A)$ is one of the matrices listed in Theorem \ref{Thm:EnumLogK2} and so, it is a logarithm of $M$.}\\
\end{proof}

By using the proposition above we obtain a parametrization of all real logarithms with rows summing to 0 of any K80 Markov matrix. Nonetheless, this parametrization is not injective because distinct choices of $A$ and $k$ may produce the same logarithm of $M$. Theorem \ref{Thm:QAlphaBeta} below provides an \textit{injective} parametrization for these logarithms (other than $\Log(M)$) by considering only matrices of the following form:
\begin{equation} \label{eq:QabDef}
	Q(k,\alpha,\beta):= Q(k,A) \text{ with } k\in \ZZ \text{ and }
	A=\begin{pmatrix}
  	1 & 0\\
   	\alpha & \beta\\
  \end{pmatrix} \text{ for some } \alpha \in \RR, \ \beta \in {\RR_{>0}}.
\end{equation}

\begin{rk}\label{rk:AtoAlphaBeta}
\rm {We claim that for any $A=(a_{ij}) \in GL_2(\RR)$ and any $k \in \ZZ$, there exist $\alpha,\beta,\tilde{k}$ such that $Q(k,A)=Q(\tilde{k},\alpha,\beta)$.  Indeed, one can easily check that $L_1(A)$ can be realized as }
\begin{center}
	{
	$L_1(A)=L_1 \left( \begin{pmatrix}
		1 & 0 \\
		\frac{a_{11} a_{21} + a_{12} a_{22}}{a_{11}^2 + a_{12}^2}& \frac{\det(A)}{a_{11}^2 + a_{12}^2}\\
	\end{pmatrix} \right)$ and $L_1(A)=- L_1 \left( \begin{pmatrix}
		1 & 0\\
		\frac{a_{11} a_{21} + a_{12} a_{22}}{a_{11}^2 + a_{12}^2}& - \frac{\det(A)}{a_{11}^2 + a_{12}^2}\\
	\end{pmatrix}\right)$.
	}
\end{center}
{Since we want the parameter $\beta$ to be positive, we will  consider one expression or the other depending on the sign of $\det(A)$. Since $L_0$ does not depend on $A$, we get that}:
	$$Q(k,A)=
	\begin{cases}
		Q\left(k,\frac{{a_{11} a_{21} + a_{12} a_{22}}}{a_{11}^2 + a_{12}^2}, \frac{\det(A)}{a_{11}^2 + a_{12}^2} \right) & \text{ if } \det(A)>0, \vspace*{5mm} \\
		Q\left(-k - \frac{\Arg y }{\pi},\frac{{a_{11} a_{21} + a_{12} a_{22}}}{a_{11}^2 + a_{12}^2}, - \frac{\det(A)}{a_{11}^2 + a_{12}^2}\right) & \text{ if } \det(A)<0 \, .\\
	\end{cases}$$
{Note that $\frac{\Arg y }{\pi}$ is either 0 (if $y>0$) or 1 (if $y<0$)}. Summing up,  by Proposition \ref{prop:L1L2}, any real logarithm $Q$ of $M$ with rows summing to $0$ can be expressed as $Q=Q(k,\alpha,\beta)$ for some $\alpha,\beta$.
\end{rk}

\begin{rk}\label{rk:VEPs}\rm {Note that the equality \eqref{eq_VEPs} obtained in the proof of Proposition \ref{prop:L1L2} allows the computation of eigenvalues and eigenvectors for $Q(k,\alpha, \beta)$. Indeed, \eqref{eq_VEPs} gives us that the eigenvalues of $Q(k,A)$ are $\log(1),\log(x), \log_k(y),\overline{\log_k(y)}$ and the eigenvectors  can be chosen as the columns of the product of matrices
$$S\,\dd(Id_2,A)\begin{pmatrix}
		1 & 0 & 0 & 0\\
		0 & 1 & 0 & 0\\
		0 & 0 & 1 & 1\\
		0 & 0 & i &-i\\
	\end{pmatrix}=\begin{pmatrix}
1 & 1 & 1 & 1\\
1 & 1 & -1 & -1\\
1 & -1 & 1 & -1\\
1 & -1 & -1 & 1\\
\end{pmatrix}
\begin{pmatrix} 1&0&0&0\\
0&1&0&0\\
0&0&1&0\\
0&0&\alpha&\beta
\end{pmatrix}\begin{pmatrix}
		1 & 0 & 0 & 0\\
		0 & 1 & 0 & 0\\
		0 & 0 & 1 & 1\\
		0 & 0 & i &-i\\
	\end{pmatrix} .$$
}
\end{rk}

Now we are ready to prove the following result.

\begin{thm}\label{Thm:QAlphaBeta}
Let $M$ be an invertible K80 Markov matrix with eigenvalues $1$, $x$, $y$, $y$ satisfying $x >0 $ and $y \neq x$, {and consider the family of real logarithms $Q(k,\alpha,\beta)$ for $M$ introduced in \eqref{eq:QabDef}. If either $k\neq0$ or $y<0$, an equality $Q(k,\alpha,\beta)=Q(k',\alpha ' , \beta')$} implies $k=k'$, $\alpha=\alpha'$ and $\beta=\beta'$. {Otherwise, if $k=0$ and $y>0$} then $Q(0,\alpha,\beta)=\Log(M)$ for any $\alpha\in \RR, \beta\in \RR_{>0}$.


\end{thm}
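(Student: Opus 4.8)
The plan is to use the eigendecomposition formula \eqref{eq_VEPs} from the proof of Proposition \ref{prop:L1L2} as the central tool. Recall that equation records that $Q(k,A)$ has eigenvalues $0,\log(x),\log_k(y),\overline{\log_k(y)}$ with eigenvectors given by the columns of $S\,\dd(Id_2,A)\,R$ (this is made explicit in Remark \ref{rk:VEPs}). For the parametrization \eqref{eq:QabDef} we restrict to $A=\left(\begin{smallmatrix}1&0\\\alpha&\beta\end{smallmatrix}\right)$ with $\beta>0$. First I would dispose of the degenerate case $k=0,\,y>0$: here $\log_0(y)=\log|y|$ is real, so $Q(0,\alpha,\beta)$ has all real eigenvalues and by \eqref{eq_VEPs} equals $S\,U\,\dd(0,\log x,\log y,\log y)\,U^{-1}S^{-1}$; since $U=\dd(Id_2,A)R$ and the middle factor is $\log y\cdot Id_2$ on the relevant block, the conjugation by $A$ acts trivially and one gets $Q(0,\alpha,\beta)=\Log(M)$ independently of $\alpha,\beta$. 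This proves the last sentence of the statement.

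For the main (injective) case, assume $k\neq 0$ or $y<0$, so that $\log_k(y)=\log|y|+(2\pi k+\Arg y)i$ has nonzero imaginary part; set $\mu:=2\pi k+\Arg y\neq 0$. Suppose $Q(k,\alpha,\beta)=Q(k',\alpha',\beta')$. The two matrices are then equal, hence have the same spectrum; comparing the non-real eigenvalues $\log|y|\pm\mu i$ and $\log|y|\pm\mu' i$ forces $\mu^2=\mu'^2$, and since both $\mu,\mu'$ are of the form $2\pi(\cdot)+\Arg y$ with $\Arg y\in\{0,\pi\}$ fixed, the relation $\mu=\pm\mu'$ pins down $k'=k$ (the choice $\mu'=-\mu$ is excluded because it would require $2\pi(k+k')=-2\Arg y$, which has no integer solution when $\Arg y\in\{0,\pi\}$ together with $\mu\neq0$, except the already-excluded degenerate configurations). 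With $k=k'$ fixed, the equality $Q(k,\alpha,\beta)=Q(k,\alpha',\beta')$ becomes $\mu\,L_1(A)=\mu\,L_1(A')$, i.e. $L_1(A)=L_1(A')$, so it remains to show the map $A=\left(\begin{smallmatrix}1&0\\\alpha&\beta\end{smallmatrix}\right)\mapsto L_1(A)$ is injective on $\alpha\in\RR,\beta>0$.

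The heart of the argument is therefore this injectivity. I would extract it from the eigenvector description: $L_1(A)$ has a two-dimensional kernel (spanned by the images under $S\,\dd(Id_2,A)$ of $e_1,e_2$, which do not involve $\alpha,\beta$) and a pair of $\pm i$ eigenvectors living in the plane spanned by the last two columns of $S\,\dd(Id_2,A)R$. Concretely, one computes that this invariant plane is determined by the column span of $\dd(Id_2,A)$ restricted to the last two coordinates, i.e. by the vector $(0,0,1,\alpha)^{t}$ and $(0,0,0,\beta)^{t}$; since the complex structure (the action sending the real part to the imaginary part) recorded by $L_1(A)$ encodes $A$ up to the stabilizer described in Remark \ref{rk:AtoAlphaBeta}, and since that remark's normalization already quotients by exactly the scaling and sign ambiguities, the residual data $(\alpha,\beta)$ with $\beta>0$ is uniquely recovered. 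The cleanest way to finish is to simply read off $\alpha$ and $\beta$ from two chosen entries of $L_1(A)$: expanding $L_1(A)=S\,\dd(Id_2,A)\,\dd(0,0;J)\,\dd(Id_2,A^{-1})\,S^{-1}$ with $J=\left(\begin{smallmatrix}0&1\\-1&0\end{smallmatrix}\right)$ and computing the lower-right $2\times2$ block gives $A J A^{-1}=\frac{1}{\beta}\left(\begin{smallmatrix}-\alpha&1\\-\alpha^2-\beta^2&\alpha\end{smallmatrix}\right)$, whose entries determine $\alpha$ (from the $(1,1)$ entry times $\beta$, or the ratio of entries) and then $\beta>0$ uniquely.

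The step I expect to be the main obstacle is the bookkeeping in the spectral comparison that rules out $\mu'=-\mu$ and cleanly forces $k'=k$: one must track how $\Arg y\in\{0,\pi\}$ interacts with the $2\pi k$ shifts and confirm that the only coincidences $Q(k,\cdot)=Q(k',\cdot)$ come from $k=k'$ (the $A\mapsto -A$, $k\mapsto -k-\Arg y/\pi$ symmetry of Remark \ref{rk:AtoAlphaBeta} is precisely what the sign constraint $\beta>0$ is designed to break, so I would invoke that remark to certify no further identifications survive). Once $k=k'$ is secured, the injectivity of $A\mapsto L_1(A)$ via the explicit block $AJA^{-1}$ is a short direct computation.
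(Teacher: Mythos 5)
Your overall strategy is sound and runs parallel to the paper's: both arguments first compare spectra to pin down $k$ and then use the ``extra'' data carried by $A$ to pin down $\alpha,\beta$. The paper does the second step by matching the complex eigenvectors $u_\pm(\alpha,\beta)=(1+\alpha,-1-\alpha,1-\alpha,-1+\alpha)^t\pm i(\beta,-\beta,-\beta,\beta)^t$ from Remark \ref{rk:VEPs} and deducing $\lambda=1$; your alternative of reading $\alpha,\beta$ off the lower-right block $AJA^{-1}=\frac{1}{\beta}\left(\begin{smallmatrix}-\alpha&1\\-\alpha^2-\beta^2&\alpha\end{smallmatrix}\right)$ is correct and arguably cleaner, since the $(1,2)$ entry recovers $\beta$ together with its sign and the $(1,1)$ entry then recovers $\alpha$. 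Your treatment of the degenerate case $k=0$, $y>0$ is also fine (indeed it is immediate, since then $2\pi k+\Arg y=0$ and $Q(0,A)=L_0=\Log(M)$ by Definition \ref{def:L1L2}).

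There is, however, one genuinely false step: the exclusion of $\mu'=-\mu$. You claim that $2\pi(k+k')=-2\Arg y$ ``has no integer solution when $\Arg y\in\{0,\pi\}$'', but it always does: $k'=-k$ when $\Arg y=0$ and $k'=-k-1$ when $\Arg y=\pi$. These are exactly the identifications $Q(k,A)=Q(-k-\Arg y/\pi,-A)$ underlying Remark \ref{rk:AtoAlphaBeta}, so the case $\mu'=-\mu$ cannot be dismissed at the level of eigenvalues; it has to be killed by the normalization $\beta,\beta'>0$. The paper does this by matching $u_+(\alpha,\beta)$ against $u_-(\alpha',\beta')$ and deriving $\beta=-\beta'$. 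Your own block formula yields the same conclusion just as quickly: $\mu'=-\mu$ forces $L_1(A)=-L_1(A')$, i.e.\ $AJA^{-1}=-A'J(A')^{-1}$, and comparing $(1,2)$ entries gives $1/\beta=-1/\beta'$, impossible for $\beta,\beta'>0$. With that repair your proof is complete; as written, the step is wrong, and the parenthetical appeal to Remark \ref{rk:AtoAlphaBeta} does not by itself substitute for this missing computation.
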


\begin{proof}
 {From Remark \ref{rk:VEPs} we have that  $(1,1,1,1)^t$, $(1,1,-1,-1)^t$ are eigenvectors with respective eigenvalues $0$ and $\log(x)$. The other two eigenvalues may be complex:
 \begin{eqnarray}\label{vaps}
 \log|y| \pm \big(2\pi k + \Arg y \big) i,
 \end{eqnarray}
and the corresponding eigenvectors are
 \begin{eqnarray*}
u_+(\alpha,\beta)& := & (1+\alpha,-1-\alpha,1-\alpha, -1+\alpha)^t + i (\beta, -\beta, -\beta,\beta)^t \\ u_-(\alpha,\beta)& := &(1+\alpha,-1-\alpha,1-\alpha, -1+\alpha)^t - i (\beta, -\beta, -\beta,\beta)^t.
 \end{eqnarray*}
}

\noindent {Let us assume that $Q(k,\alpha,\beta) = Q(k',\alpha',\beta')$ for some other choice of $k',\alpha',\beta'$}.

{1st case:  assume that the eigenvalues in (\ref{vaps}) are complex. It is straightforward to check that this is the case if and only if $k\neq 0$ or if $y<0$.   In any case, the four eigenvalues are different and so, the eigenspaces are 1-dimensional. From the equality of eigenvalues for $Q(k,\alpha,\beta)$ and $Q(k',\alpha',\beta')$, we obtain that
\begin{eqnarray*}
\log|y| \pm \big(2\pi k + \Arg y \big) i = \log|y| \pm \big(2\pi k' + \Arg y \big) i.
\end{eqnarray*}
If $2\pi k + \Arg y =2\pi k' + \Arg y $, we derive that $k=k'$. Moreover, the corresponding eigenspaces must be equal, so
$u_+(\alpha,\beta)=\lambda \; u_+(\alpha',\beta')$, for some $\lambda\in \CC\setminus \{0\}$.
From the first and third coordinates of these vectors we have $1+\alpha+\beta i = \lambda(1+\alpha'-\beta' i)$ and $1-\alpha-\beta i = \lambda(1-\alpha'+\beta' i)$ and by summing these two equations we derive that $\lambda = 1$, and $\alpha=\alpha'$ and $\beta=\beta'$.
If  $2\pi k + \Arg y =  -(2\pi k' + \Arg y )$,  then  $u_+(\alpha,\beta)=\lambda \; u_-(\alpha',\beta')$, for some $\lambda\in \CC\setminus \{0\}$. Similarly as above, we deduce that $\lambda = 1$. However, in this case, this implies that  $\beta = -\beta'$, which contradicts the assumption $\beta, \beta' >0$.
}

{2nd case: it remains to deal with the case where the eigenvalues in (\ref{vaps}) are real, that is, $k=0$ and $y>0$.
%
%
In this case, the eigenvalues of $Q(0,\alpha,\beta)$ are the principal logarithms of the eigenvalues of $M$. From the uniqueness of the principal logarithm, we infer that  $Q(0,\alpha,\beta)=Log(M)$ for all $\alpha, \beta$.
%
}
\end{proof}

While the previous results list all the \textit{real} logarithms \textit{with rows summing to zero} of any K80 Markov matrix, we are mainly interested in those that are rate matrices (i.e. we need to restrict to non-negative values in the  off-diagonal entries). The following results characterize the matrices $Q(k,\alpha,\beta)$ that are rate matrices.


\begin{lema}\label{lema:QabIsGen}
With the notation of Theorem \ref{Thm:QAlphaBeta}, we have that $Q(k,\alpha,\beta)$ is a rate matrix if and only if the following inequalities hold:
  \begin{align}
		\log(x) -2 \log|y| & \geq \big| 2\pi k +\Arg y \big| \frac{\big|1 - \alpha^2-\beta^2 \big|}{\beta} \label{eq:IffPhiPsi1}	 \intertext{and}
		-\log(x) &\geq \big| 2\pi k +\Arg y \big| \frac{\big(1+|\alpha|\big)^2 + \beta^2}{\beta}\label{eq:IffPhiPsi2} \, .
  \end{align}
\end{lema}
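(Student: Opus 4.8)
The plan is to reduce the rate-matrix condition to the non-negativity of the twelve off-diagonal entries of $Q(k,\alpha,\beta)$, which I would compute in closed form. By Proposition \ref{prop:L1L2} the rows of $Q(k,\alpha,\beta)$ already sum to $0$, so $Q(k,\alpha,\beta)$ is a rate matrix if and only if all its off-diagonal entries are non-negative (the diagonal entries are then automatically non-positive and impose no extra condition).

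First I would make the two building blocks of Definition \ref{def:L1L2} explicit. Since $S$ is symmetric with $S^2=4\,Id_4$, we have $S^{-1}=\tfrac14 S$, and hence $L_0=\tfrac14\,S\,\dd(0,\log x,\log|y|,\log|y|)\,S$ is again a K80-type matrix $K(a',b',c',c')$ with $b'=\tfrac14(\log x-2\log|y|)$ and $c'=-\tfrac14\log x$. For $L_1(A)$ I would use the identity $\dd(Id_2,A)\,\dd(0_2,B)\,\dd(Id_2,A^{-1})=\dd(0_2,ABA^{-1})$, where $B=\left(\begin{smallmatrix}0&1\\-1&0\end{smallmatrix}\right)$ and $0_2$ is the $2\times2$ zero block, so that $L_1(A)=\tfrac14\,S\,\dd(0_2,ABA^{-1})\,S$. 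A direct computation with $A=\left(\begin{smallmatrix}1&0\\\alpha&\beta\end{smallmatrix}\right)$ gives $ABA^{-1}=\tfrac1\beta\left(\begin{smallmatrix}-\alpha&1\\-(\alpha^2+\beta^2)&\alpha\end{smallmatrix}\right)$, and conjugating by $S$ produces every entry of $L_1(A)$ explicitly.

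Writing $t:=2\pi k+\Arg y$, I would then read off the off-diagonal entries of $Q(k,\alpha,\beta)=L_0+t\,L_1(A)$. They split into three groups: four entries of the form $b'\pm\tfrac{t}{4}\cdot\tfrac{1-\alpha^2-\beta^2}{\beta}$ (in positions $(1,2),(2,1),(3,4),(4,3)$), four entries $c'\pm\tfrac{t}{4}\cdot\tfrac{(1+\alpha)^2+\beta^2}{\beta}$, and four entries $c'\pm\tfrac{t}{4}\cdot\tfrac{(1-\alpha)^2+\beta^2}{\beta}$, with each sign occurring twice within its group. Using $\beta>0$, the simultaneous non-negativity of a pair $w\pm s$ is equivalent to $w\ge|s|$, so the first group forces $b'\ge\tfrac{|t|}{4}\cdot\tfrac{|1-\alpha^2-\beta^2|}{\beta}$, while the last two groups together force $c'\ge\tfrac{|t|}{4}\max\!\big(\tfrac{(1+\alpha)^2+\beta^2}{\beta},\tfrac{(1-\alpha)^2+\beta^2}{\beta}\big)$. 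Finally, observing that $\max\{(1+\alpha)^2,(1-\alpha)^2\}=(1+|\alpha|)^2$, substituting $b'=\tfrac14(\log x-2\log|y|)$ and $c'=-\tfrac14\log x$, and clearing the factor $\tfrac14$ turns these two conditions into exactly \eqref{eq:IffPhiPsi1} and \eqref{eq:IffPhiPsi2}, with $|t|=|2\pi k+\Arg y|$.

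The only delicate point is the bookkeeping: one must compute $ABA^{-1}$ and its conjugate $\tfrac14\,S\,\dd(0_2,ABA^{-1})\,S$ correctly and then check that the twelve off-diagonal entries really do collapse into the three two-valued families above, so that each inequality in the statement captures exactly one family. The coincidence that the $\pm$ signs pair up the way they do is precisely what makes the two absolute-value inequalities both necessary and sufficient; everything else is a routine, if slightly lengthy, matrix multiplication.
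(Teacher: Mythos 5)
Your proposal is correct and follows essentially the same route as the paper: both reduce the claim to the non-negativity of the off-diagonal entries of $Q(k,\alpha,\beta)=L_0+(2\pi k+\Arg y)L_1(A)$, compute $L_0$ and $L_1(A)$ explicitly (your $b'=\tfrac14(\log x-2\log|y|)$, $c'=-\tfrac14\log x$ and the three sign-paired families of entries match the paper's explicit matrices), and convert each pair $w\pm s\ge 0$ into $w\ge|s|$, using $\max\{(1+\alpha)^2,(1-\alpha)^2\}=(1+|\alpha|)^2$ to merge the two $c'$-families into \eqref{eq:IffPhiPsi2}. The only cosmetic difference is that you organize the computation of $L_1(A)$ through $ABA^{-1}$ before conjugating by $S$, whereas the paper simply displays the resulting matrix.
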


\begin{proof}
From Proposition \ref{prop:L1L2} we have that $Q(k,\alpha,\beta)$ is a real matrix with rows summing to $0$. Thus we only need to characterize those $Q(k,\alpha,\beta)$ that have non-negative entries outside the diagonal. Let us define $\lambda= -\log(x)$ and $\mu= -\log|y|$. Note that $\lambda,\mu >0 $ because $x,|y|\in (0,1]$ (see Remark \ref{rmk_vaps}). By computing $L_0$ as in Definition \ref{def:L1L2} we get that:
\begin{equation*}
	L_0=1/4
	\begin{small}
		\begin{pmatrix}
			-\lambda -2\mu & -\lambda +2\mu & \lambda & \lambda\\
			-\lambda +2\mu & -\lambda -2\mu& \lambda & \lambda\\
 			\lambda & \lambda & -\lambda -2\mu & -\lambda +2\mu\\
 			\lambda & \lambda & -\lambda +2\mu & -\lambda -2\mu
		\end{pmatrix}
	\end{small} \, .
\end{equation*}
On the other hand, given
	$A=\begin{pmatrix}
		1 & 0\\
		\alpha & \beta\\
	\end{pmatrix} \in GL_2(\RR)$
we obtain the following expression for $L_1(A)$:
\begin{equation*}
	L_1(A) = \frac{1}{4 \beta}
 	\begin{small}
 		\begin{pmatrix}
	  		1 - \alpha^2-\beta^2  & -(1 - \alpha^2-\beta^2) & -\big((1+\alpha)^2 + \beta^2\big) & (1+\alpha)^2 + \beta^2 \vspace*{3mm}\\
	  	 	-(1 - \alpha^2-\beta^2) & 1 - \alpha^2-\beta^2 & (1+\alpha)^2 + \beta^2 & -\big((1+\alpha)^2 + \beta^2\big)\vspace*{3mm}\\
	  	(1-\alpha)^2 + \beta^2 & -\big((1-\alpha)^2 + \beta^2 \big) & -(1 - \alpha^2-\beta^2) & 1 - \alpha^2-\beta^2\vspace*{3mm}\\
	  	-\big((1-\alpha)^2 + \beta^2 \big) & (1-\alpha)^2 + \beta^2 & 1 - \alpha^2-\beta^2 & -(1 - \alpha^2-\beta^2)\\
    \end{pmatrix}
 	\end{small} \, .
\end{equation*}

Recall that $Q(k,\alpha,\beta)= L_0+ \big(2\pi k+\Arg y \big)L_1(A)$. By looking at the off-diagonal entries, we get that $Q(k,\alpha, \beta)$ is a rate matrix if and only if:
$$\begin{aligned}
	-\lambda +2\mu \pm \big(2\pi k +\Arg y \big)\frac{1 - \alpha^2-\beta^2}{\beta}& \geq 0 \qquad (entries \ (1,2) , (2,1) ,(3,4) ,(4,3)\geq 0)\vspace*{3mm}\\
	\lambda \pm \big(2\pi k +\Arg y \big)\frac{(1+\alpha)^2 + \beta^2}{\beta}& \geq 0 \qquad (entries \ (1,3),(1,4),(2,3),(2,4)\geq 0)\vspace*{3mm}\\
	\lambda \pm \big(2\pi k +\Arg y \big)\frac{(1-\alpha)^2 + \beta^2}{\beta}& \geq 0 \qquad (entries \ (3,1),(3,2),(4,1),(4,2)\geq 0) \, .
\end{aligned}$$

The first inequality above gives \eqref{eq:IffPhiPsi1}, while \eqref{eq:IffPhiPsi2} follows by joining the second and third inequalities.
\end{proof}

In the following result, we prove that if $M$ is embeddable then $Q(0,0,1)$ is a Markov generator.

\begin{thm}\label{thm:GLOrt}
Let $M$ be an invertible K80 Markov matrix with eigenvalues $1$, $x$, $y$, $y$ satisfying $x >0 $ and $y \neq x$. Then, if $Q(k,\alpha,\beta)$ is a rate matrix for some $k\in \ZZ$, $\alpha\in \RR$ and $\beta \in \RR_{>0}$ it holds that:
\begin{enumerate}[i)]
	\item $Q(l,\alpha,\beta)$ is a rate matrix for any integer $l \in I_k$ where $$I_k=
	\begin{cases}
 		\langle -k,k \rangle  & \text{if } y>0\\
 		\langle -k-1,k \rangle & \text{if } y<0
	\end{cases} $$
	(we use the notation $\langle a,b\rangle$ to denote the closed interval delimited by $a$ and $b$, no matters if $a>b$ or $a<b$).
	\item $Q( k, 0,1)$ is a rate matrix.
\end{enumerate}
\end{thm}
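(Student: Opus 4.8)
The plan is to reduce everything to the two scalar inequalities of Lemma~\ref{lema:QabIsGen}, exploiting that their right-hand sides depend on $k$ only through the single factor $c_k := \big|2\pi k + \Arg y\big| \ge 0$, while their left-hand sides are completely independent of $k$. Writing the inequalities as $\log(x) - 2\log|y| \ge c_k\, f(\alpha,\beta)$ and $-\log(x) \ge c_k\, g(\alpha,\beta)$, with $f(\alpha,\beta) = |1-\alpha^2-\beta^2|/\beta \ge 0$ and $g(\alpha,\beta) = \big((1+|\alpha|)^2+\beta^2\big)/\beta > 0$, the whole argument becomes a matter of monotonicity in $c_k$ together with one elementary optimization.

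For part~(i) I fix the pair $(\alpha,\beta)$ coming from the hypothesis and simply decrease $c_k$. If $Q(k,\alpha,\beta)$ is a rate matrix, both inequalities hold for $c_k$; since $f,g \ge 0$ and the left-hand sides do not move, both inequalities continue to hold for any integer $l$ with $c_l \le c_k$, so $Q(l,\alpha,\beta)$ is again a rate matrix. It then remains to check that the set $\{l \in \ZZ : c_l \le c_k\}$ is exactly $I_k$. When $y>0$ we have $\Arg y = 0$, so $c_l = 2\pi|l|$ and the condition reads $|l|\le|k|$, i.e. $l\in\langle -k,k\rangle$. When $y<0$ we have $\Arg y = \pi$, so $c_l = \pi|2l+1|$ and the condition reads $|2l+1|\le|2k+1|$; a short case split on the sign of $k$ shows this is precisely $l\in\langle -k-1,k\rangle$. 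This matches the two cases in the statement.

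For part~(ii) I specialize to $\alpha=0,\beta=1$. Then $f(0,1)=|1-0-1|=0$, so the first inequality for $Q(k,0,1)$ reduces to $\log(x)-2\log|y|\ge 0$; but this is already forced by inequality~\eqref{eq:IffPhiPsi1} for the given rate matrix $Q(k,\alpha,\beta)$, whose right-hand side is nonnegative. For the second inequality I use that $(0,1)$ minimizes $g$: indeed $g(\alpha,\beta) = \big((1+|\alpha|)^2+\beta^2\big)/\beta \ge (1+\beta^2)/\beta = \beta + 1/\beta \ge 2$ by AM--GM, with $g(0,1)=2$. Hence, from \eqref{eq:IffPhiPsi2} applied to the hypothesised rate matrix, $-\log(x) \ge c_k\, g(\alpha,\beta) \ge 2 c_k = c_k\, g(0,1)$, which is exactly \eqref{eq:IffPhiPsi2} for $Q(k,0,1)$. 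Both inequalities of Lemma~\ref{lema:QabIsGen} thus hold for $Q(k,0,1)$, so it is a rate matrix.

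The only genuinely delicate point is the bookkeeping in part~(i) for $y<0$: the additive shift $\Arg y = \pi$ turns the relevant quantity into the odd number $|2l+1|$ rather than $|l|$, and one must verify that the integer solution set of $|2l+1|\le|2k+1|$ coincides with $\langle -k-1,k\rangle$ for both signs of $k$. Everything else is the monotonicity observation plus the one-line AM--GM estimate identifying $(0,1)$ as the minimizer of $g$.
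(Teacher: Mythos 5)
Your proof is correct and follows essentially the same route as the paper's: both parts are reduced to the two inequalities of Lemma~\ref{lema:QabIsGen}, part (i) by monotonicity in $\big|2\pi l+\Arg y\big|$ and part (ii) by observing that $(\alpha,\beta)=(0,1)$ annihilates the right-hand side of \eqref{eq:IffPhiPsi1} and minimizes $\big((1+|\alpha|)^2+\beta^2\big)/\beta$ at the value $2$. The only differences are cosmetic: you spell out the verification that $\{l:|2\pi l+\Arg y|\le|2\pi k+\Arg y|\}$ equals $I_k$ (which the paper calls ``straightforward'') and you use AM--GM where the paper differentiates $(1+\beta^2)/\beta$.
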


\begin{proof}\
We will prove that $Q(l,\alpha, \beta)$ and $Q(k,0,1)$ are Markov matrices by checking that they satisfy the inequalities (\ref{eq:IffPhiPsi1}) and (\ref{eq:IffPhiPsi2}) in Lemma \ref{lema:QabIsGen}.
\begin{enumerate}[i)]

	\item  The proof is straightforward from { Lemma \ref{lema:QabIsGen}} and the fact that $| 2\pi l + \Arg y | \leq |2\pi k + \Arg y |$ for any $l \in I_k$ {(note that $\Arg y$ is either 0 if $y>0$ or $\pi$ if $y<0$)}.

	\item Since $Q(k,\alpha,\beta)$ is a rate matrix it follows from Lemma \ref{lema:QabIsGen} that:
 	\begin{align*}
		\log(x) - 2\log(y) & \geq \big| 2\pi k +\Arg y \big| \frac{ \big| 1 - \alpha^2-\beta^2 \big| }{\beta}\geq {0 = \big| 2\pi k +\Arg y \big| \frac{\big|1 - 0^2-1^2 \big|}{1} } \intertext{and}
		-\log(x) & \geq \big| 2\pi k +\Arg y \big| \frac{\big(1+|\alpha|\big)^2 + \beta^2}{\beta} \geq \big| 2\pi k +\Arg y \big| \frac{\big(1 + 0)^2 + \beta^2}{\beta} \, .
 	\end{align*}
	Now, let us consider the real function $f(\beta)=\frac{1+\beta^2}{\beta}$ restricted to $\RR_{>0}$. $f$ is continuous and a straightforward computation shows that $f$ has an absolute minimum at $\beta =1$. This concludes the proof.
\end{enumerate}
\end{proof}


Now we are ready to prove the main result in this section, which characterizes embeddable K80 Markov matrices.
\begin{cor}\label{cor:CharOfEmb}
A K80 Markov matrix $M$ with eigenvalues $1$, $x$, $y$, $y$ is embeddable if and only if $Q(0,0,1)$ is a rate matrix, where
$$Q(0,0,1) =S\;
	\begin{small} \begin{pmatrix}
		0 & 0 & 0 & 0\\
		0 & \log(x) & 0 & 0\\
		0 & 0 & \log|y|& -\Arg y \\
	0 & 0 & \Arg y & \log|y|\\
	\end{pmatrix} \end{small}
\;S^{-1}. $$
In particular, if $y>0$ then $M$ is embeddable if and only if $\Log(M)$ is a rate matrix.

\end{cor}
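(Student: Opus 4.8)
The plan is to read the statement as an assembly of the machinery built in this section, after first disposing of the degenerate eigenvalue configurations in which $Q(0,0,1)$ is not even defined. If $x=0$ or $y=0$ then $\det M=0$ and $M$ is not embeddable by Remark \ref{rmk_vaps}(d); if $x<0$ (with $y\neq x$) then $M$ has no real logarithm by Remark \ref{rmk_vaps}(c); in both situations $Q(0,0,1)$ would involve $\log 0$ or the logarithm of a negative number and is not a real matrix, so neither side of the equivalence holds. If $x=y$ then $M$ is a JC69 matrix, embeddable iff $x>0$ by Remark \ref{rmk_vaps}(e), and when $x=y>0$ one checks that $Q(0,0,1)=\Log(M)$ is a rate matrix directly from Theorem \ref{thm:LogK3Embed}(ii), the required inequalities reducing to $x\ge x^2$, which holds for $0<x\le 1$. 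This reduces everything to the standing case $x>0$, $y\neq x$ (so that $y\neq 0$ by invertibility and $Q(0,0,1)$ is defined).

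In this case the backward implication is immediate: by Proposition \ref{prop:L1L2}, every matrix $Q(k,A)$, and in particular $Q(0,0,1)=Q(0,Id_2)$, is a real logarithm of $M$; hence if $Q(0,0,1)$ happens to be a rate matrix then $M=\exp(Q(0,0,1))$ is embeddable by definition. For the forward implication, suppose $M=e^{Q}$ with $Q$ a rate matrix. Being a rate matrix, $Q$ is real with rows summing to $0$, so Proposition \ref{prop:L1L2} yields $Q=Q(k,A)$ for some $k\in\ZZ$ and $A\in GL_2(\RR)$, and Remark \ref{rk:AtoAlphaBeta} rewrites this in normal form as $Q=Q(\tilde k,\alpha,\beta)$ with $\beta>0$. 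Now invoke Theorem \ref{thm:GLOrt}: part (ii) gives that $Q(\tilde k,0,1)$ is again a rate matrix, and part (i), applied with the parameters $(\alpha,\beta)=(0,1)$, propagates this to every index in $I_{\tilde k}$. Since $0\in I_{\tilde k}$ for all $\tilde k$ (both $\langle -\tilde k,\tilde k\rangle$ and $\langle -\tilde k-1,\tilde k\rangle$ contain $0$), we conclude that $Q(0,0,1)$ is a rate matrix, as wanted. Finally, the displayed expression for $Q(0,0,1)$ is obtained by substituting $A=Id_2$ and $k=0$ into Definition \ref{def:L1L2}, and the ``in particular'' clause follows because $\Arg y=0$ when $y>0$, whence $Q(0,0,1)=L_0=\Log(M)$, as already noted after Definition \ref{def:L1L2}.

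The conceptual crux is entirely the forward direction, namely passing from the mere existence of \emph{some} rate-matrix logarithm to the rate-matrix property of the one distinguished logarithm $Q(0,0,1)$. This is exactly what Theorem \ref{thm:GLOrt} was designed to supply, so the corollary is mostly bookkeeping; the only points that demand care are the reduction of an arbitrary rate-matrix logarithm to the normal form $Q(\tilde k,\alpha,\beta)$ (Proposition \ref{prop:L1L2} together with Remark \ref{rk:AtoAlphaBeta}) and the elementary but essential observation that $0$ always lies in the index interval $I_{\tilde k}$.
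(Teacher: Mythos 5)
Your proof is correct and follows essentially the same route as the paper's: dispose of the degenerate cases ($\det M\le 0$ or $x<0$) via Remark \ref{rmk_vaps}, handle $x=y$ through the JC69 result and Theorem \ref{thm:LogK3Embed}, and in the main case $x>0$, $y\neq x$ combine Proposition \ref{prop:L1L2}, Remark \ref{rk:AtoAlphaBeta} and Theorem \ref{thm:GLOrt}, using that $0\in I_{\tilde k}$. You merely spell out the chain of reductions that the paper compresses into a one-line citation.
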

\begin{proof}
%
If $x<0$ or $\det(M)=0$,  then $M$ is not embeddable (see Remark \ref{rmk_vaps}).
For $x=y>0$, $M$ is a JC69 matrix and it is known that a JC69 matrix is embeddable if and only if its eigenvalues are positive \citep{JJ}. In this case, $y>0$ and $ Q(0,0,1) = \Log(M)$ is a rate matrix (see Theorem \ref{thm:LogK3Embed}).
Finally, if $x \neq y $ and $x>0$ the first claim follows from  Proposition \ref{prop:L1L2}, Remark \ref{rk:AtoAlphaBeta} and Theorem \ref{thm:GLOrt}.
To conclude the proof, note that if $y>0$ then $\Arg y=0$ thus $Q(0,0,1) = \Log(M)$ (see Definition \ref{def:log}).
\end{proof}

The following corollaries use Corollary \ref{cor:CharOfEmb} to characterize embeddable K80 Markov matrices in terms of its eigenvalues (Corollary \ref{cor:VapsEmbed}) and in terms of its entries (Corollary \ref{cor:EntriesEmbed}).

\begin{cor}\label{cor:VapsEmbed}
Let $M$ be a K80 Markov matrix with eigenvalues $1$, $x$, $y$, $y$. Then:
\begin{enumerate}[(i)]
	\item If $y=0:$ $M$ is not embeddable.
	\item If $y>0:$ $M$ is embeddable if and only if $x\geq y^2$. In this case, $Log(M)$ is a rate matrix.
	\item If $y<0:$ $M$ is embeddable if and only if $e^{-2\pi} \geq x \geq y^2$.
\end{enumerate}
\end{cor}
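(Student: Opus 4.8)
The plan is to reduce everything to Corollary \ref{cor:CharOfEmb}, which asserts that $M$ is embeddable if and only if the single matrix $Q(0,0,1)$ is a rate matrix, and then to read off the non-negativity of the off-diagonal entries of $Q(0,0,1)$ in each regime for $y$. Since $Q(0,0,1)$ is only a bona fide real matrix once $\log(x)$ and $\log|y|$ are defined, I would first dispose of the degenerate eigenvalue configurations.

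For part (i), $y=0$ forces $\det(M)=0$, so $M$ is not embeddable by Remark \ref{rmk_vaps}(d). In the remaining parts I may therefore assume $y\neq0$, and I would next record that $x>0$ is necessary for embeddability: if $x=0$ then $\det(M)=0$ (Remark \ref{rmk_vaps}(d)); if $x<0$ then either $x=y$ and $M$ is a non-embeddable JC69 matrix (Remark \ref{rmk_vaps}(e)), or $x\neq y$ and $M$ has no real logarithm (Remark \ref{rmk_vaps}(c)). In every such subcase $x\leq 0<y^2$, so the stated inequalities fail exactly where $M$ fails to be embeddable; it thus remains to treat $x>0$, $y\neq0$.

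In this regime I would apply Lemma \ref{lema:QabIsGen} to the triple $(k,\alpha,\beta)=(0,0,1)$ (equivalently, inspect the explicit $Q(0,0,1)$ displayed in Corollary \ref{cor:CharOfEmb}). The decisive simplification is that $\alpha=0$, $\beta=1$ give $1-\alpha^2-\beta^2=0$ and $(1+|\alpha|)^2+\beta^2=2$, so the two inequalities of the lemma collapse to
\[
	\log(x)-2\log|y| \geq 0, \qquad -\log(x)\geq 2\,\Arg y .
\]
The first is equivalent to $x\geq y^2$ regardless of the sign of $y$. For part (ii), $y>0$ gives $\Arg y=0$, so the second inequality reads $x\leq 1$, which holds automatically by Perron--Frobenius (Remark \ref{rmk_vaps}(a)); hence embeddability is equivalent to $x\geq y^2$, and since $\Arg y=0$ we have $Q(0,0,1)=\Log(M)$, so the principal logarithm is the Markov generator. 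For part (iii), $y<0$ gives $\Arg y=\pi$, so the second inequality becomes $-\log(x)\geq 2\pi$, i.e. $x\leq e^{-2\pi}$; together with $x\geq y^2$ this is precisely $e^{-2\pi}\geq x\geq y^2$.

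The only point needing slight care is the boundary subcase $x=y>0$ of part (ii), where $M$ is a JC69 matrix and Lemma \ref{lema:QabIsGen} does not formally apply (it assumes $y\neq x$). Here I would instead lean on the JC69 characterization already used inside the proof of Corollary \ref{cor:CharOfEmb}, noting that $x\geq y^2=x^2$ holds automatically for $0<x\leq1$, so the stated equivalence persists. Apart from this bookkeeping the argument is a direct computation, and I foresee no genuine obstacle once Corollary \ref{cor:CharOfEmb} and Lemma \ref{lema:QabIsGen} are available.
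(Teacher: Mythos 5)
Your proposal is correct and follows essentially the same route as the paper: reduce to Corollary \ref{cor:CharOfEmb}, specialize the inequalities of Lemma \ref{lema:QabIsGen} to $(k,\alpha,\beta)=(0,0,1)$, and exponentiate to obtain $x\geq y^2$ together with $x\leq 1$ (automatic, by Perron--Frobenius) when $y>0$ or $x\leq e^{-2\pi}$ when $y<0$. Your explicit treatment of the degenerate cases $x\leq 0$ and $x=y$ is slightly more careful than the paper's one-line appeal to the $x\neq y$ hypothesis, but it is the same argument in substance.
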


\begin{proof}
By Corollary \ref{cor:CharOfEmb} we know that $M$ is embeddable if and only if $Q(0,0,1)$ is a rate matrix. If $x\neq y$ it follows from Lemma \ref{lema:QabIsGen} that $Q(0,0,1)$ is a rate matrix if and only if $\log(x) - 2\log|y| \geq 0$ and $-\log(x) \geq 2 \Arg y $. {By computing the exponential of both sides of the expression $\log(x) \geq 2\log|y|$ we get $x\geq y^2$. Now, if $y<0$ we have $\Arg{y}=\pi$ and hence $-log(x) \geq 2 \Arg y$ implies $e^{-2\pi}\geq x$. On the other hand, if $y>0$ we have that $\Arg{y}=0$ and hence $-\log(x) \geq 2 \Arg y $ implies that $1 \geq x$ . Note that this last constraint is redundant since the absolute value of the eigenvalues of any Markov matrix is bounded by 1 due to Perron-Frobenius theorem.}
\end{proof}

\begin{rk}\label{rmk_Log}
\rm In the case of positive eigenvalues $x$ and $y$, we have that $M$ is embeddable if and only if its principal logarithm $\Log(M)$ is a rate matrix (Corollary \ref{cor:VapsEmbed}). However, when the multiple eigenvalue $y$ is negative, $\Log(M)$ is never a rate matrix. In this case, the embeddability cannot be checked by looking at $\Log(M)$ and the previous corollary provides an embeddability criterion.
{In particular, the third item in the corollary above (or in Corollary \ref{cor:EntriesEmbed} below) provides a non-zero measure subset formed by embeddable K80 matrices whose principal logarithm is not a rate matrix.}
\end{rk}

\begin{cor}\label{cor:EntriesEmbed}
Let $M=K(1-b-2c,b,c,c)$. Then:
\begin{enumerate}[(i)]
	\item If ${2}c=1-2b$, $M$ is not embeddable.
	\item If ${2}c<1-2b$, $M$ is embeddable if and only if $c \leq \sqrt{b} -b \ (\leq \frac{1}{4})$. In this case, $Log(M)$ is a rate matrix.
	\item If ${2}c>1-2b$, $M$ is embeddable if and only if $\frac{1}{4} - \frac{e^{-2\pi}}{4} \leq c \leq \sqrt{b} -b \ (\leq \frac{1}{4})$.
\end{enumerate}
\end{cor}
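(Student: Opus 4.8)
The plan is to derive Corollary \ref{cor:EntriesEmbed} directly from Corollary \ref{cor:VapsEmbed} by translating every eigenvalue condition into a condition on the entries via the bijection $\varphi$ of \eqref{eq:paramBijection}. Recall that for $M=K(1-b-2c,b,c,c)$ the eigenvalues are $x=1-4c$ and $y=1-2b-2c$. The three cases of Corollary \ref{cor:VapsEmbed} are governed by the sign of $y$, so the first step is to observe that $y=0$, $y>0$, $y<0$ correspond respectively to $2c=1-2b$, $2c<1-2b$, $2c>1-2b$. This immediately matches the trichotomy in the statement, and item (i) is then just the case $y=0$, which is not embeddable.

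For item (ii), I would assume $2c<1-2b$, i.e. $y>0$, and rewrite the embeddability condition $x\geq y^2$ in terms of $b,c$. Substituting gives $1-4c\geq(1-2b-2c)^2$. The key algebraic step is to show this is equivalent to $c\leq\sqrt{b}-b$. I expect that expanding and regrouping yields a quadratic in $c$ (or, more cleanly, completing a square) that factors through the substitution making $\sqrt b$ appear; the natural route is to write the inequality as a condition on $c$ and identify the relevant root, checking that the other root is excluded by the standing assumptions. One must also confirm the parenthetical bound $\sqrt b-b\leq\tfrac14$, which follows since $\sqrt b-b=\tfrac14-(\sqrt b-\tfrac12)^2\leq\tfrac14$. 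Because $y>0$, Corollary \ref{cor:VapsEmbed}(ii) also gives that $\Log(M)$ is a rate matrix, so that assertion carries over verbatim.

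For item (iii), I would assume $2c>1-2b$, i.e. $y<0$, and translate the two-sided eigenvalue condition $e^{-2\pi}\geq x\geq y^2$. The upper bound $x\leq e^{-2\pi}$ becomes $1-4c\leq e^{-2\pi}$, which rearranges to $c\geq\tfrac14-\tfrac{e^{-2\pi}}{4}$, giving the left endpoint. The lower bound $x\geq y^2$ is the same inequality $1-4c\geq(1-2b-2c)^2$ analyzed in item (ii), so the same computation produces $c\leq\sqrt b-b$ as the right endpoint. Thus the two cases share their right-hand boundary, as the statement reflects.

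The main obstacle is the algebraic equivalence $1-4c\geq(1-2b-2c)^2\iff c\leq\sqrt b-b$, since this is where the square root enters and where one must be careful about which root of the resulting quadratic is the operative constraint and whether any branch is spurious under $a=1-b-2c\geq0$ and the other Markov/positivity constraints. Everything else is a routine rewriting of signs and thresholds through $\varphi$; I would verify the quadratic equivalence carefully and treat the endpoint bound $\sqrt b-b\leq\tfrac14$ as a quick sanity check rather than a separate lemma.
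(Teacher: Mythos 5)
Your proposal is correct and follows exactly the paper's route: the paper proves this corollary in one line by translating Corollary \ref{cor:VapsEmbed} through the bijection \eqref{eq:paramBijection}, which is precisely your plan. The one step you flag as delicate is in fact clean: $1-4c\geq(1-2b-2c)^2$ expands to $b\geq(b+c)^2$, which for $b,c\geq 0$ is equivalent to $b+c\leq\sqrt{b}$, i.e.\ $c\leq\sqrt{b}-b$, with no spurious root to exclude.
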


\begin{proof}
The claim follows directly from Corollary \ref{cor:VapsEmbed} by expressing the eigenvalues in terms of the entries using the bijection given in (\ref{eq:paramBijection}).
\end{proof}

\begin{rk}\label{rmk_embedk81}
\rm
As claimed in the introduction, the results above together with Corollary 3.5 by \citet{JJ} solve the embedding problem for any K81 matrix. Indeed, the embeddability of any K81 matrix with two repeated eigenvalues can be treated analogously to K80 matrices (one just needs to permute rows and columns accordingly). If the matrix has three repeated eigenvalues, then it is a JC69 matrix and the following argument solves the embeddability of JC69 matrices: there are no embeddable matrices with negative determinant \citep{Culver} and by Theorem \ref{thm:LogK3Embed}, the principal logarithm of a JC69 matrix with positive determinant is always a rate matrix. Thus, a JC69 matrix is embeddable if and only if its determinant is positive.
\end{rk}

\section{Identifiability of rates for K80 Markov matrices}\label{sec:identifiability}

In this section we address the identifiability problem for K80 embeddable matrices. As a consequence of the results obtained in the previous section we provide a criterion to determine whether the rates of these matrices are identifiable or not. Furthermore, for those matrices with non-identifiable rates we determine how many Markov generators they admit.

\begin{prop} \label{prop:identIFF}
Let $M$ be an embeddable K80 Markov matrix with eigenvalues $1$, $x$, $y$, $y$ {and $x\neq y$}. Then, the rates of $M$ are identifiable if and only if $Q(-1,0,1)$ is not a rate matrix.
\end{prop}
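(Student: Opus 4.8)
The plan is to leverage the machinery already developed in Section \ref{sec:Embeddability}, specifically the injective parametrization of Theorem \ref{Thm:QAlphaBeta} together with the monotonicity result of Theorem \ref{thm:GLOrt}. Since $M$ is embeddable with $x\neq y$, by Corollary \ref{cor:CharOfEmb} we know $Q(0,0,1)$ is a rate matrix, and by Proposition \ref{prop:L1L2} and Remark \ref{rk:AtoAlphaBeta} every Markov generator of $M$ has the form $Q(k,\alpha,\beta)$ for some $k\in\ZZ$, $\alpha\in\RR$, $\beta>0$. So \emph{identifiability fails} precisely when there exists a Markov generator $Q(k,\alpha,\beta)$ distinct from $Q(0,0,1)$. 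The goal is thus to show that the existence of \emph{any} such extra generator is equivalent to the single matrix $Q(-1,0,1)$ being a rate matrix.

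First I would dispose of the uniqueness coming from the parametrization itself. By Theorem \ref{Thm:QAlphaBeta}, when $y>0$ all the matrices $Q(0,\alpha,\beta)$ collapse to $\Log(M)$, whereas for $(k,\alpha,\beta)$ with $k\neq 0$ or $y<0$ the parametrization is injective. Hence non-identifiability means either (in the $y>0$ case) that some $Q(k,\alpha,\beta)$ with $k\neq 0$ is a rate matrix, or (in the $y<0$ case) that some $Q(k,\alpha,\beta)$ with $(k,\alpha,\beta)\neq(0,0,1)$ is a rate matrix — and here I must be careful, because for $y<0$ even $k=0$ can give generators genuinely different from $Q(0,0,1)$. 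The key reduction is then: the set of $k$ for which \emph{some} generator $Q(k,\alpha,\beta)$ exists is governed by the size of $|2\pi k+\Arg y|$, and by Theorem \ref{thm:GLOrt}(ii) the optimal choice of $(\alpha,\beta)$ for any admissible $k$ is $(0,1)$. Concretely, if $Q(k,\alpha,\beta)$ is a rate matrix for some $\alpha,\beta$, then $Q(k,0,1)$ is a rate matrix; and by Theorem \ref{thm:GLOrt}(i) this propagates to all intermediate integers in $I_k$.

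The heart of the argument is therefore to reduce everything to comparing $|2\pi k+\Arg y|$ across integers and to identify the smallest nonzero value. I would argue that among all $k\neq 0$ (when $y>0$), and among all $k$ giving a matrix other than $Q(0,0,1)$ (when $y<0$), the quantity $|2\pi k+\Arg y|$ is minimized exactly at $k=-1$: indeed when $y>0$ one has $\Arg y=0$ so $|2\pi k|$ is smallest (among nonzero $k$) at $k=\pm 1$, and by the symmetry $I_k=\langle -k,k\rangle$ the cases $k=1$ and $k=-1$ carry the same information, so checking $k=-1$ suffices; when $y<0$ one has $\Arg y=\pi$, the value $|2\pi k+\pi|$ is minimized at $k=0$ and $k=-1$ both giving $\pi$, and since $Q(0,0,1)$ is already a generator the genuinely new candidate is $k=-1$. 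Because the inequalities \eqref{eq:IffPhiPsi1}--\eqref{eq:IffPhiPsi2} in Lemma \ref{lema:QabIsGen} with $(\alpha,\beta)=(0,1)$ depend on $k$ only through $|2\pi k+\Arg y|$ and their right-hand sides are monotone increasing in this quantity, $Q(-1,0,1)$ being a rate matrix is both necessary (it is the minimal nontrivial candidate) and sufficient (if it fails, Theorem \ref{thm:GLOrt}(ii) shows no $Q(k,\alpha,\beta)$ with larger $|2\pi k+\Arg y|$ can be a rate matrix either, forcing every generator to coincide with $Q(0,0,1)$) for non-identifiability.

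The main obstacle I anticipate is the bookkeeping in the $y<0$ case, where $Q(0,0,1)\neq\Log(M)$ and where one must separate the \emph{trivial} generator $Q(0,0,1)$ from genuinely distinct ones while tracking the interval $I_k=\langle -k-1,k\rangle$ correctly; in particular one must verify that no choice of $(\alpha,\beta)\neq(0,1)$ at $k=0$ sneaks in an extra generator, which follows from the injectivity in Theorem \ref{Thm:QAlphaBeta} combined with the optimality of $(0,1)$ in Theorem \ref{thm:GLOrt}(ii). Once that case distinction is handled cleanly, the equivalence with $Q(-1,0,1)$ being a rate matrix falls out of the monotonicity of the Lemma \ref{lema:QabIsGen} inequalities.
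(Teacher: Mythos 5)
Your proof is correct and follows essentially the same route as the paper's: embeddability gives $Q(0,0,1)$ as a generator (Corollary \ref{cor:CharOfEmb}), any further generator $Q(k,\alpha,\beta)$ propagates via Theorem \ref{thm:GLOrt} to $Q(-1,0,1)$ because $-1\in I_k$ for every admissible $k$, and Theorem \ref{Thm:QAlphaBeta} guarantees $Q(-1,0,1)\neq Q(0,0,1)$. One phrasing quibble: for $y<0$ a generator $Q(0,\alpha,\beta)$ with $(\alpha,\beta)\neq(0,1)$ genuinely can occur, so the point is not to rule it out but to note that its existence still forces $Q(-1,0,1)$ to be a rate matrix via $I_0=\langle -1,0\rangle$ --- which your earlier propagation step already covers.
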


\begin{proof}
By Corollary \ref{cor:CharOfEmb} we know that $M$ is embeddable if and only if $Q(0,0,1)$ is a rate matrix.

Now assume that there are $\alpha\in \RR$, $\beta \in \RR_{>0}$ and $k\in\ZZ$ such that $Q(k,\alpha,\beta)$ is a Markov generator for $M$ different {from} $Q(0,0,1)$. In this case, Theorem \ref{thm:GLOrt} gives that $Q(-1,0,1)$ is also Markov generator because $-1$ belongs to the interval $I_{k}$, independently of the sign of $y$ (note that if $y>0$, the case $k=0$ is excluded by Theorem \ref{Thm:QAlphaBeta}). Moreover, according to Theorem \ref{Thm:QAlphaBeta}, $Q(0,0,1)$ and $Q(-1,0,1)$ are distinct Markov generators.
\end{proof}

\begin{ex}\label{ex:pos_eig}
In this example we show an embeddable K80 Markov matrix with positive eigenvalues and non-identifiable rates. Let us consider $M$ the K80 Markov matrix with eigenvalues $1$, $x=e^{-4\pi}$ and $y=e^{-2\pi}$ (with multiplicity 2). Rounding to the 10th decimal the entries of $M$ are:
$$M= \begin{small} \begin{pmatrix}
		0.2509345932 & 0.2490671504 & 0.2499991282 & 0.2499991282\\
		0.2490671504 & 0.2509345932 & 0.2499991282 & 0.2499991282\\
		0.2499991282 & 0.2499991282 & 0.2509345932 & 0.2490671504\\
		0.2499991282 & 0.2499991282 & 0.2490671504 & 0.2509345932\\
  \end{pmatrix} \end{small} \, .$$
A straightforward computation shows that $\Log(M)$ is a Markov generator and hence $M$ is embeddable:
$$\Log(M)= \begin{small} \begin{pmatrix}
 		-2\pi & 0 & \pi & \pi\\
 		0 & -2\pi & \pi & \pi\\
 		\pi & \pi & -2\pi & 0 \\
 		\pi & \pi & 0 & -2\pi \\
  \end{pmatrix} \end{small}\, .$$
Nonetheless, the rates of $M$ are not identifiable since there are other Markov generators for it:
$$Q(-1,0,1)= \begin{small} \begin{pmatrix}
 		-2\pi & 0 & 2\pi & 0\\
 		0 & -2\pi & 0 & 2\pi\\
 		0 & 2\pi & -2\pi & 0 \\
		2\pi & 0 & 0 & -2\pi \\
  \end{pmatrix} \end{small}
\qquad \text{ and } \qquad Q(1,0,1) =
	\begin{small} \begin{pmatrix}
 		-2\pi & 0 & 0 & 2\pi\\
 		0 & -2\pi & 2\pi & 0\\
 		2\pi & 0 & -2\pi & 0 \\
 		0 & 2\pi & 0 & -2\pi \\
  \end{pmatrix} \end{small}\, .$$

{Note that this matrix has determinant equal to $e^{-8\pi}$ and is not close to the identity matrix $Id_4$. Indeed, in the \emph{Frobenius distance}, $\|M-Id_4\|_F \approx 1.729893126$. At the other extreme, for any positive K80 matrix $K$, the limiting matrix $H:=\lim_{n\rightarrow \infty}K^n$  is the $4\times4$ matrix whose entries are all $1/4$ and for the matrix above we have $\|M-H\|_F \approx 0.002640965$ (so $M$ is closer to $H$ than to $Id_4$).}
\end{ex}

\begin{ex}\label{ex:negative}
Here we show an embeddable K80 Markov matrix with some negative eigenvalues and non-identifiable rates. Let us consider $M$ the K80 Markov matrix with eigenvalues $1$, $x=e^{-2\pi}$ and $y=-e^{-\pi}$ (with multiplicity 2). Rounding to the 10th decimal  the entries of $M$ are:
$$M=\begin{small}
	\begin{pmatrix}
		0.2288599016 & 0.2720738198 & 0.2495331393 & 0.2495331393\\		
		0.2720738198 & 0.2288599016 & 0.2495331393 & 0.2495331393\\
		0.2495331393 & 0.2495331393 & 0.2288599016 & 0.2720738198\\
		0.2495331393 & 0.2495331393 & 0.2720738198 & 0.2288599016\\
	\end{pmatrix}
\end{small} \, .$$
As we can see, $\Log(M)$ is not a real matrix:
$$ \Log(M)=\frac{1}{2}
\begin{small}
	\begin{pmatrix}
  		-2 \pi +\pi \ i & -\pi\ i 	 & \pi & \pi\\
 		-\pi\ i   & -2 \pi +\pi \ i & \pi & \pi\\
 		\pi & \pi & -2 \pi +\pi \ i & -\pi\ i\\
 		\pi & \pi & -\pi\ i & -2 \pi +\pi \ i\\
 \end{pmatrix}
\end{small}\, .$$
In spite of that, $Q(0,0,1)$ is a rate matrix, so $M$ is embeddable. Furthermore, $Q(-1,0,1)$ is also a Markov generator for $M$ and hence the rates are not identifiable:
\begin{center}
	$ Q(0,0,1)=
	\begin{footnotesize}
		\begin{pmatrix}
  			- \pi & 0 	 & 0 & \pi\\
 			0  & - \pi  & \pi & 0\\
 			\pi & 0 & - \pi & 0\\
 			0 & \pi & 0 & - \pi\\
 		\end{pmatrix}
	\end{footnotesize}$
	\vspace*{3mm} \qquad
	 $Q(-1,0,1)=
	 \begin{footnotesize}
		\begin{pmatrix}
 			- \pi & 0	 & \pi & 0\\
 			0  & - \pi  & 0 & \pi\\
 			{0} & {\pi} & - \pi & 0\\
 			{\pi} & {0} & 0 & - \pi\\
 		\end{pmatrix}
	\end{footnotesize}.$
\end{center}

Unlike the previous example, in this case $Q(1,0,1)$ is not a rate matrix. Furthermore, $M$ is a K80 embeddable matrix with no K80 Markov generators (because $\Log(M)$ is not a rate matrix, see Theorem \ref{thm:LogK3Embed}), {which shows that rates do not necessarily satisfy the same symmetry constraints as probabilities do \citep{kimura80}; see also \citep{JJ}}.

{With the notation of the previous example, note that this matrix has $\|M-Id_4\|_F \approx 1.781252133$, $\det(M) =e^{-4 \pi}$, and $\|M-H\|_F \approx 0.06114223420$.}
\end{ex}


\begin{figure}[h]
  \centering
  \begin{tabular}{cc}
    \includegraphics[width=7cm]{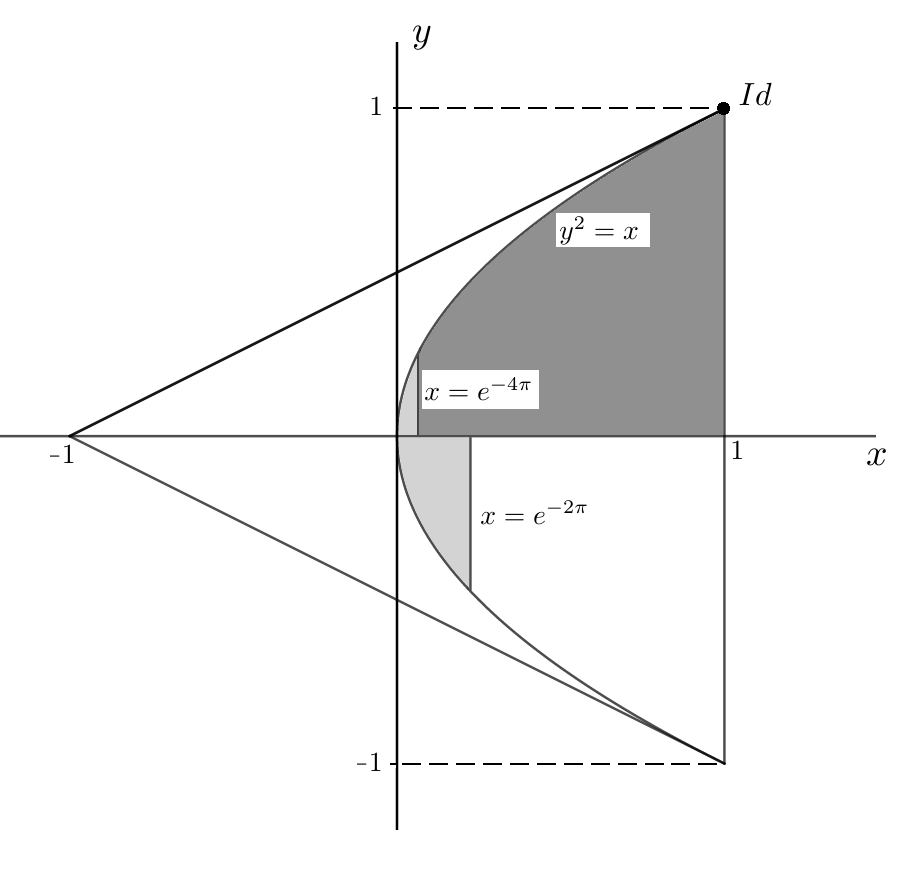} & \includegraphics[width=7cm]{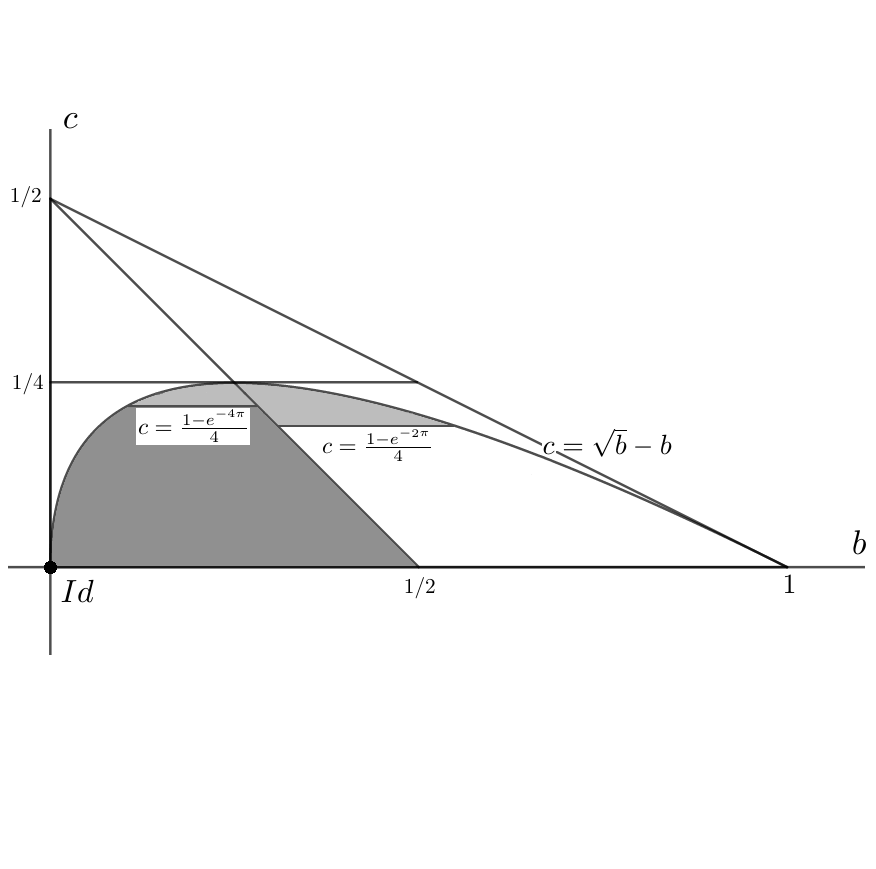} \\
    (a) Parametrization in terms of eigenvalues $x$ and $y$ & (b) Parametrization in terms of entries $b$ and $c$
    \end{tabular}
  \caption{Parameterizations of K80 Markov matrices in terms of eigenvalues (a) and entries (b): embeddable matrices with only one Markov generator in dark grey and embeddable matrices with infinitely many Markov generators in light grey. The lines separating both areas contain embeddable matrices with non-identifiable rates but a finite number of Markov generators. The scale in these figures is not exact so that the light grey subset could be visualized.}\label{fig:Parametrizations}
\end{figure}

\begin{rk}
\rm In the examples \ref{ex:pos_eig} and \ref{ex:negative}, the Markov generators other than the principal logarithm are not K81 matrices, they belong to one of the Lie Markov models listed by \citet{LieMM}, namely the model 3.3b. This is another 3-dimensional model, different from the K81 model, which contains the K80 model as well.
\end{rk}

\begin{thm}\label{thm:vapsIdentiff}

Let $M$ be an embeddable K80 Markov matrix with eigenvalues $1$, $x$, $y$, $y$ {and $x\neq y$}. Then the following holds:
\begin{enumerate}[a)]

	\item If $y>0$ and $x > e^{-4\pi}$ then $M$ has only one Markov generator, which is its principal logarithm.

	\item If $y>0$ and $x=e^{-4\pi}$ then $M$ has exactly $3$ generators: $Q(0,0,1)$ (which coincides with $\Log(M)$), $Q(1,0,1)$ and $Q(-1,0,1)$.

	\item If $y<0$ and $x=e^{-2\pi}$ then $M$ has exactly $2$ generators: $Q(0,0,1)$ and $Q(-1,0,1)$.
\end{enumerate}
{Otherwise, $M$ has infinitely many Markov generators.}
\end{thm}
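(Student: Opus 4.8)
The plan is to count the Markov generators of $M$ by pushing every one of them through the injective parametrization of Theorem \ref{Thm:QAlphaBeta} and then deciding, level by level in $k$, how many pairs $(\alpha,\beta)$ yield a rate matrix via Lemma \ref{lema:QabIsGen}. First I would set up the bookkeeping: by Proposition \ref{prop:L1L2} together with Remark \ref{rk:AtoAlphaBeta}, every Markov generator of $M$ equals $Q(k,\alpha,\beta)$ for some $k\in\ZZ$, $\alpha\in\RR$, $\beta\in\RR_{>0}$; and by Theorem \ref{Thm:QAlphaBeta} the map $(k,\alpha,\beta)\mapsto Q(k,\alpha,\beta)$ is injective whenever $k\neq0$ or $y<0$, while for $y>0$ all the matrices $Q(0,\alpha,\beta)$ coincide with $\Log(M)$. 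Hence counting Markov generators amounts to computing, for each $k$, the size of
$$S_k:=\{(\alpha,\beta)\in\RR\times\RR_{>0}: Q(k,\alpha,\beta)\text{ is a rate matrix}\},$$
with the single convention that for $y>0$ the whole set $S_0$ collapses to the one generator $\Log(M)$, whereas all other nonempty $S_k$ contribute $|S_k|$ pairwise distinct generators.

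The analytic heart is the description of $S_k$ from Lemma \ref{lema:QabIsGen}. Writing $c_k:=|2\pi k+\Arg{y}|$, its two inequalities read $\log(x)-2\log|y|\geq c_k\,\frac{|1-\alpha^2-\beta^2|}{\beta}$ and $-\log(x)\geq c_k\, g(\alpha,\beta)$, where $g(\alpha,\beta):=\frac{(1+|\alpha|)^2+\beta^2}{\beta}$. The key fact I would establish is that $g$ attains a strict global minimum equal to $2$ at the single point $(\alpha,\beta)=(0,1)$: indeed $g(\alpha,\beta)=\frac{(1+|\alpha|)^2}{\beta}+\beta\geq 2(1+|\alpha|)\geq 2$, with both inequalities tight only when $\alpha=0$ and $\beta=1$. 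Consequently, when $c_k>0$ the second inequality alone forces a trichotomy: if $-\log(x)<2c_k$ then $S_k=\emptyset$; if $-\log(x)=2c_k$ then the sublevel set $\{g\leq 2\}$ is just $(0,1)$, which also satisfies the first inequality (whose right-hand side is $0$ and whose left-hand side is $\geq 0$ by embeddability, since $x\geq y^2$ gives $\log(x)\geq 2\log|y|$), so $S_k=\{(0,1)\}$; and if $-\log(x)>2c_k$ then $S_k$ contains a neighbourhood of $(0,1)$ and is infinite.

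In the last case the only delicate point, which I would treat separately, is the boundary $x=y^2$, i.e. $\log(x)=2\log|y|$, where the first inequality degenerates into the constraint $\alpha^2+\beta^2=1$; restricting $g$ to this circle yields $g=2\sqrt{(1+|\alpha|)/(1-|\alpha|)}$, a continuous function equal to $2$ at $\alpha=0$, so an entire arc near $(0,1)$ still lies in $S_k$ and $S_k$ stays infinite. Away from this boundary one instead fixes $\alpha=0$ and varies $\beta$ near $1$, where $g(0,\beta)=\beta+\beta^{-1}<-\log(x)/c_k$ on an interval while $\frac{|1-\beta^2|}{\beta}$ stays below the positive left-hand side of the first inequality. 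Either way, $|S_k|$ depends on $k$ only through $c_k$, so automatically $|S_k|=|S_{-k}|$ when $y>0$ and $|S_k|=|S_{-k-1}|$ when $y<0$, in agreement with Theorem \ref{thm:GLOrt}(i), and $(0,1)\in S_k$ whenever $S_k\neq\emptyset$, identifying the relevant generator as $Q(k,0,1)$.

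Finally I would run the case analysis through the explicit values of $c_k$. For $y>0$ one has $\Arg{y}=0$ and $c_k=2\pi|k|$, so for $k\neq0$ the set $S_k$ is nonempty exactly when $x\leq e^{-4\pi|k|}$: if $x>e^{-4\pi}$ every such $S_k$ is empty and $\Log(M)$ is the only generator (case (a)); if $x=e^{-4\pi}$ then $S_{\pm1}=\{(0,1)\}$ and $S_k=\emptyset$ for $|k|\geq2$, giving the three distinct generators $Q(0,0,1)=\Log(M)$, $Q(1,0,1)$, $Q(-1,0,1)$ (case (b)); and if $x<e^{-4\pi}$ then $S_1$ is infinite. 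For $y<0$ one has $\Arg{y}=\pi$ and $c_k=\pi|2k+1|$, embeddability forces $x\leq e^{-2\pi}$ by Corollary \ref{cor:VapsEmbed}, and $2c_0=2c_{-1}=2\pi$: if $x=e^{-2\pi}$ then $S_0=S_{-1}=\{(0,1)\}$ while $S_k=\emptyset$ for $k\notin\{0,-1\}$, giving the two distinct generators $Q(0,0,1)$ and $Q(-1,0,1)$ (case (c)); and if $x<e^{-2\pi}$ then $S_0$ is infinite. In every remaining embeddable configuration some $S_k$ is infinite, whence $M$ has infinitely many Markov generators. The main obstacle is the boundary case $x=y^2$ flagged above, together with keeping the injective versus non-injective bookkeeping of Theorem \ref{Thm:QAlphaBeta} straight so that each nonempty $S_k$ translates correctly into a count of \emph{distinct} generators.
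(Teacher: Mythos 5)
Your proposal is correct and follows essentially the same route as the paper: both arguments hinge on the strict global minimum $\phi(0,1)=2$ of $\frac{(1+|\alpha|)^2+\beta^2}{\beta}$, use inequality \eqref{eq:IffPhiPsi2} to force the trichotomy in $k$, verify \eqref{eq:IffPhiPsi1} via $x\geq y^2$ (restricting to the circle $\alpha^2+\beta^2=1$ for the infinite case), and keep the same injectivity bookkeeping from Theorem \ref{Thm:QAlphaBeta}. Your treatment is somewhat more systematic (explicit sets $S_k$ and an explicit formula for $\phi$ on the circle at the boundary $x=y^2$), but the content is the same.
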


Figure \ref{fig:Parametrizations} illustrates Theorem \ref{thm:vapsIdentiff} and Theorem \ref{thm:entriesIdentiff}.

\begin{proof}
For ease of reading let us consider $\phi:\RR\times \RR_{>0} \rightarrow \RR$, defined by $\phi(\alpha, \beta)= \frac{\big(1+|\alpha|\big)^2+\beta^2}{\beta}$. In the proof of Theorem \ref{thm:GLOrt} we already saw that $\phi$ has an absolute minimum at $\phi(0,1)=2$.

{Note that, since $M$ is embeddable, we have $x \geq y^2$ (Corollary \ref{cor:VapsEmbed}) and hence inequality (\ref{eq:IffPhiPsi1}) is satisfied for any $\alpha,\beta$ such that $\alpha^2+\beta^2 =1$ or for $k=0$ when $y>0$.}

{Let $Q(k,\alpha, \beta)$ be a Markov generator for $M$.}
\begin{enumerate}[a)]
	\item We know that $Q(-1,0,1)$ is not a rate matrix by Lemma \ref{lema:QabIsGen} {(it does not satisfy (\ref{eq:IffPhiPsi2}))}. Hence, by Proposition \ref{prop:identIFF} the rates of $M$ are identifiable. Furthermore, due to Corollary \ref{cor:CharOfEmb} we have that the only Markov generator of $M$ must be its principal logarithm.

	\item {Since $y>0$ we have that $\Arg{y}=0$. Hence,} it follows from inequality (\ref{eq:IffPhiPsi2}) in Lemma \ref{lema:QabIsGen} that $4\pi \geq 2|k|\pi\; \phi(\alpha,\beta)$. Using that $\phi(\alpha,\beta)>2$ for $(\alpha,\beta)\neq(0,1)$ we get that the inequality holds if and only if $k=0$ (and hence $Q(0,\alpha,\beta)=\Log(M)$) or $\alpha=0$, $\beta=1$ and $|k| = 1$. { Note that all these solutions do also satisfy inequality (\ref{eq:IffPhiPsi1}).}

	\item {Since $y<0$ we have that $\Arg{y}=\pi$. Hence,} it follows from inequality (\ref{eq:IffPhiPsi2}) in Lemma \ref{lema:QabIsGen} that $2\pi \geq \big|(2k+1)\pi\; \phi(\alpha,\beta) \big|$. Using that $\phi(\alpha,\beta)>2$ for $(\alpha,\beta)\neq(0,1)$ we get that the inequality holds if and only if $k=0$ \big(and hence $Q(0,\alpha,\beta)=\Log(M)$\big) or $\alpha=0$, $\beta=1$ and $|2k+1| \leq 1$. { Note that all these solutions do also satisfy inequality (\ref{eq:IffPhiPsi1}).}
\end{enumerate}
{If $M$ does not lie in the already covered cases, note that if $y>0$ we have $x< e^{-4\pi}$. In particular,} $-\log(x) > 4\pi= 2\pi|-1|\; \phi(0,1)$ and inequality (\ref{eq:IffPhiPsi2}) is satisfied for $Q(-1,0,1)$. Furthermore, inequality (\ref{eq:IffPhiPsi2}) is satisfied for any $(\alpha,\beta)$ close enough to $(0,1)$. Hence, it follows from Lemma \ref{lema:QabIsGen} that $Q(-1,\alpha,\beta)$ is a Markov generator of $M$ for any $(\alpha,\beta)$ close enough to $(0,1)$ such that $\alpha^2+\beta^2=1$. The same argument does also work for $y<0$.
\end{proof}


\begin{rk} \label{rk:BranchLength}
\rm  The determinant of $M$ is related to the expected number of nucleotide substitutions in the Markov process ruled by $M$. In a phylogenetic tree, the length of a branch representing an evolutionary process between an ancestral species and a descendant species is usually measured as the expected number of nucleotide substitutions per site. If this process is ruled by a {substitution} matrix with uniform stationary distribution 
 {then this expected number of substitutions can be approximated by $l(M):=-\frac{1}{4}log (\det(M))$ \citep[for a precise formulation see][]{barry_hartigan}. For example, for the identity matrix one has $l(Id_4)=0$ and for the limiting matrix $H$ introduced in Example \ref{ex:pos_eig}, $l(H)=\infty$. 
 We come back to this biological concept in the Discussion but for the moment we give conditions on the identifiability in terms of $l(M)$.}
\end{rk}

{
 \begin{cor}\label{cor:brlength}
Let $M$ be an embeddable K80 Markov matrix with eigenvalues $1$, $x$, $y$, $y$ {with $y\neq x>0$}. Then,
\begin{enumerate}
 \item[(a)] If $y>0$ and $l(M) < {2\pi}$, the rates of $M$ are identifiable.
 \item[(b)] If $y<0$ and $l(M) < {\pi}$, the rates of $M$ are identifiable.
\end{enumerate}
Moreover, these bounds are sharp (tight examples are given in \ref{ex:pos_eig} with $l(M)=2\pi$ and in \ref{ex:negative} with $l(M)=\pi$).
 \end{cor}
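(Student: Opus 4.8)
The plan is to route everything through the determinant, since the branch length depends only on $\det M = x\,y^2$ (the product of the eigenvalues $1,x,y,y$). As $x>0$, this determinant is positive, and the clean identity
$$
4\,l(M) \;=\; -\log(\det M) \;=\; -\log x - 2\log|y|
$$
is the bridge I would exploit between $l(M)$ and the eigenvalue inequalities that already govern embeddability and identifiability (Corollary \ref{cor:VapsEmbed} and Theorem \ref{thm:vapsIdentiff}). So the whole proof is really a repackaging of facts proved earlier, and I expect no genuinely hard computation.

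For part (a), with $y>0$ I would use that embeddability forces $x \ge y^2$ (Corollary \ref{cor:VapsEmbed}(ii)), equivalently $-2\log|y| \ge -\log x$. Substituting into the identity above gives $4\,l(M) \ge -2\log x$, i.e.\ $x \ge e^{-2 l(M)}$. Hence the hypothesis $l(M) < 2\pi$ yields $x > e^{-4\pi}$, and Theorem \ref{thm:vapsIdentiff}(a) then guarantees that $\Log(M)$ is the unique Markov generator, so the rates of $M$ are identifiable. (Note this is only a sufficient condition: $x>e^{-4\pi}$ does not conversely force $l(M)<2\pi$, since $|y|$ could be very small.)

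For part (b), with $y<0$ embeddability gives the two-sided bound $y^2 \le x \le e^{-2\pi}$ (Corollary \ref{cor:VapsEmbed}(iii)). The upper bound gives $-\log x \ge 2\pi$, and the lower bound gives $-2\log|y| \ge -\log x \ge 2\pi$; adding these shows $4\,l(M) \ge 4\pi$, i.e.\ $l(M)\ge \pi$ for \emph{every} embeddable matrix with $y<0$. Thus the hypothesis $l(M)<\pi$ is never actually satisfied, and statement (b) holds vacuously. This is the one subtle point, and the step I would flag as the main ``obstacle'': the content of (b) is not that it describes a nonempty identifiable regime, but that $\pi$ is the sharp location below which no embeddable $y<0$ matrix lives; one must be careful to state it correctly rather than as a genuine identifiable case.

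Finally, for sharpness I would simply evaluate $l(M)=-\tfrac{1}{4}\log(x\,y^2)$ on the two worked examples: Example \ref{ex:pos_eig} (with $x=e^{-4\pi}$, $y=e^{-2\pi}$) gives $l(M)=2\pi$, and Example \ref{ex:negative} (with $x=e^{-2\pi}$, $y=-e^{-\pi}$) gives $l(M)=\pi$, both exhibiting non-identifiable rates. This confirms that neither threshold can be relaxed to a non-strict inequality, completing the proof.
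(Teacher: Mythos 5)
Your proposal is correct and follows essentially the same route as the paper: both arguments pass through $\det M = xy^2 \le x^2$ (using the embeddability constraint $x\ge y^2$) and then invoke the identifiability thresholds $x>e^{-4\pi}$ (resp.\ $x=e^{-2\pi}$) from Theorem \ref{thm:vapsIdentiff}, with sharpness read off from Examples \ref{ex:pos_eig} and \ref{ex:negative}. Your observation that part (b) holds vacuously (every embeddable matrix with $y<0$ already has $l(M)\ge\pi$) is accurate and is a point the paper's proof leaves implicit.
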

}

\begin{proof}
{Theorem \ref{thm:vapsIdentiff} shows that the eigenvalues $x,y$ of embeddable K80 matrices must satisfy $x \geq y^2$ and hence their determinant $xy^2$ must be smaller than or equal to  $x^2$. On the other hand, Theorem \ref{thm:vapsIdentiff} provides a bound on $x$ (depending on the sign of $y$) to determine whether an embeddable matrix has identifiable rates or not. More precisely, K80 embeddable matrices with non-identifiable rates have determinant at most $e^{-8\pi}$ if all the eigenvalues are positive or determinant at most $e^{-4\pi}$ if they have a repeated negative eigenvalue. Equivalently, their branch length is greater than or equal to $2\pi$ or $\pi$, respectively.}

{As we noted in the Examples \ref{ex:pos_eig} and \ref{ex:negative} above, the K80 embeddable matrices provided there have non-identifiable rates and determinant $e^{-8\pi}$ and $e^{-4\pi}$, respectively.} 
\end{proof}

{In the following corollary we restate Theorem \ref{thm:vapsIdentiff} in terms of the entries of the Markov matrix.}

\begin{cor}\label{cor:entriesIdentiff2}
{Let $M=K(1-b-2c,b,c,c)$ with $b\neq c$ be an embeddable K80 Markov matrix. Then the following holds:}
\begin{enumerate}[i)]

	\item { If $2c<1-2b$ and $c <\frac{1}{4}-\frac{ e^{-4\pi}}{4}$ then $M$ has only one Markov generator, which is its principal logarithm.}

	\item { If $2c<1-2b$ and  $c = \frac{1}{4}-\frac{ e^{-4\pi}}{4}$ then $M$ has exactly $3$ generators: $Q(0,0,1)$ (which coincides with $\Log(M)$), $Q(1,0,1)$ and $Q(-1,0,1)$.}

	\item { If $2c>1-2b$ and $c = \frac{1}{4}-\frac{ e^{-2\pi}}{4}$ then $M$ has exactly $2$ generators: $Q(0,0,1)$ and $Q(-1,0,1)$.}
\end{enumerate}
{Otherwise, $M$ has infinitely many Markov generators.}
\end{cor}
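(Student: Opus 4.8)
The plan is to derive Corollary \ref{cor:entriesIdentiff2} directly from Theorem \ref{thm:vapsIdentiff} by translating the conditions on the eigenvalues $x,y$ into conditions on the entries $b,c$ via the bijection $\varphi$ of \eqref{eq:paramBijection}. Recall that for a K80 Markov matrix $M=K(1-b-2c,b,c,c)$ the repeated eigenvalue is $y=1-2b-2c$ and the simple eigenvalue is $x=1-4c$, while the hypothesis $b\neq c$ is exactly the condition $x\neq y$ already required in Theorem \ref{thm:vapsIdentiff}. The only genuine work is a dictionary between the sign of $y$ and the value of $x$ on the one hand, and the two regimes $2c\lessgtr 1-2b$ together with the threshold values of $c$ on the other.

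First I would record the sign translation: the condition $y>0$ is equivalent to $1-2b-2c>0$, i.e. $2c<1-2b$, and likewise $y<0$ corresponds to $2c>1-2b$; the borderline $y=0$ (i.e. $2c=1-2b$) yields a non-embeddable matrix by Corollary \ref{cor:EntriesEmbed}(i) and is therefore excluded from the hypothesis that $M$ be embeddable. Next I would translate the thresholds on $x$. Using $x=1-4c$, the equality $x=e^{-4\pi}$ becomes $1-4c=e^{-4\pi}$, that is $c=\tfrac{1}{4}-\tfrac{e^{-4\pi}}{4}$, and the inequality $x>e^{-4\pi}$ becomes $c<\tfrac{1}{4}-\tfrac{e^{-4\pi}}{4}$. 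Similarly, $x=e^{-2\pi}$ is equivalent to $c=\tfrac{1}{4}-\tfrac{e^{-2\pi}}{4}$. Substituting these equivalences into parts (a), (b), (c) of Theorem \ref{thm:vapsIdentiff} gives precisely items (i), (ii), (iii) of the corollary, and the residual ``otherwise'' case maps onto the ``otherwise'' clause of the theorem.

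The argument is essentially a change of variables, so there is no serious obstacle; the one point requiring a word of care is consistency with embeddability. Since $M$ is assumed embeddable, Corollary \ref{cor:EntriesEmbed} guarantees $c\leq\sqrt{b}-b$ in both the $y>0$ and $y<0$ regimes (and additionally $c\geq\tfrac{1}{4}-\tfrac{e^{-2\pi}}{4}$ when $y<0$), so the thresholds appearing here lie within the embeddable range and the case distinctions of Theorem \ref{thm:vapsIdentiff} apply verbatim. I would therefore present the proof as: invoke the bijection \eqref{eq:paramBijection} to rewrite the eigenvalue conditions of Theorem \ref{thm:vapsIdentiff} in terms of $b$ and $c$, verify the three threshold translations $x>e^{-4\pi}\Leftrightarrow c<\tfrac14-\tfrac{e^{-4\pi}}4$, $x=e^{-4\pi}\Leftrightarrow c=\tfrac14-\tfrac{e^{-4\pi}}4$, and $x=e^{-2\pi}\Leftrightarrow c=\tfrac14-\tfrac{e^{-2\pi}}4$, and conclude that each listed generator count transfers unchanged.
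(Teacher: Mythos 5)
Your proposal is correct and follows exactly the paper's route: the paper proves this corollary by invoking Theorem \ref{thm:vapsIdentiff} and translating the eigenvalue conditions into entry conditions via the bijection \eqref{eq:paramBijection}, which is precisely your change of variables $x=1-4c$, $y=1-2b-2c$. Your additional checks (the sign dictionary $y\gtrless 0\Leftrightarrow 2c\lessgtr 1-2b$ and consistency with the embeddability constraints of Corollary \ref{cor:EntriesEmbed}) merely spell out details the paper leaves implicit.
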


\begin{proof}
{ The claim follows directly from Theorem \ref{thm:vapsIdentiff}  by expressing the eigenvalues in terms of the entries using the bijection given in (\ref{eq:paramBijection}).}
\end{proof}

{We can now provide a proof for Theorem \ref{thm:entriesIdentiff}:}

\begin{proofThm11}
{Theorem \ref{thm:entriesIdentiff} is the summary of the results in Corollary \ref{cor:EntriesEmbed} and Corollary \ref{cor:entriesIdentiff2}.}
\end{proofThm11}

 {We turn our attention to matrices whose diagonal entries are the largest entries in each column. These are called \emph{diagonal largest in column} matrices, briefly \emph{DLC}, and are related to matrix parameter identifiability in phylogenetics \citep[see][]{Chang}. A straightforward computation shows that
 \begin{equation}\label{eq:DLC}
M=K(1-b-2c,b,c,c) \textrm{ is DLC if and only if } 2b+2c<1, \, b+3c<1.
 \end{equation}
}
Note that the Markov matrix of Example \ref{ex:pos_eig} is DLC, thus there are DLC matrices whose rates are not identifiable.

\begin{cor}\label{DLC}{The following holds:}
	\begin{enumerate}[(a)]
		\item {A K80 DLC matrix is embeddable if and only if its principal logarithm is a rate matrix.}
		\item { There is an open set of embeddable DLC matrices whose rates are not identifiable.}
	\end{enumerate}
\end{cor}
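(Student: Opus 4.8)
The plan is to route both statements through the eigenvalue characterizations already available, after translating the DLC inequalities \eqref{eq:DLC} into eigenvalue coordinates by means of \eqref{eq:paramBijection}. Writing $b=\frac{1+x-2y}{4}$, $c=\frac{1-x}{4}$, a one-line computation yields $2b+2c=1-y$ and $b+3c=1-\frac{x+y}{2}$. Thus the first DLC condition $2b+2c<1$ is equivalent to $y>0$, while the second, $b+3c<1$, is equivalent to $x+y>0$. The two facts I extract are: every K80 DLC matrix has $y>0$, and, conversely, as soon as $x>0$ and $y>0$ both DLC inequalities are automatically satisfied.

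For part (a) this reduction makes the statement immediate. Since a DLC matrix has $y>0$, it falls under case (ii) of Corollary \ref{cor:VapsEmbed}. The implication ``$\Log(M)$ a rate matrix $\Rightarrow$ $M$ embeddable'' is just the definition of embeddability, as $M=e^{\Log(M)}$. For the converse, Corollary \ref{cor:VapsEmbed}(ii) states precisely that when $y>0$ and $M$ is embeddable its principal logarithm is a rate matrix. Combining the two directions proves (a).

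For part (b) I would exhibit an explicit open region in eigenvalue coordinates, for instance
\[
V=\{(x,y): e^{-4\pi}<y<e^{-2\pi},\ y^2<x<e^{-4\pi}\},
\]
which is open and non-empty (for each $y\in(e^{-4\pi},e^{-2\pi})$ one has $y^2<e^{-4\pi}$, so the $x$-interval is non-degenerate). On $V$ we have $0<x,y<1$, so the corresponding entries $a,b,c$ are non-negative and, by the reduction above, the matrix is DLC. Since $x>y^2$ and $y>0$, Corollary \ref{cor:VapsEmbed}(ii) gives embeddability. Finally $x<e^{-4\pi}<y$ forces $x\neq y$ and places $(x,y)$ in the ``otherwise'' case of Theorem \ref{thm:vapsIdentiff}, so $M$ has infinitely many Markov generators and its rates are not identifiable. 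As $\varphi^{-1}$ in \eqref{eq:paramBijection} is a linear isomorphism, the image of $V$ is an open set of entries with the required properties.

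The subtle point, and the one I would be most careful about, is genuine openness rather than a single witness: the matrix of Example \ref{ex:pos_eig} sits at the corner $x=y^2=e^{-4\pi}$, lying on both the embeddability boundary $x=y^2$ and the identifiability boundary $x=e^{-4\pi}$, hence it is not interior to the non-identifiable embeddable locus. The computation $b+3c=1-\frac{x+y}{2}$ is exactly what guarantees that the second DLC constraint only asks for $x+y>0$ and so does not cut into $V$; and the inequality $y>e^{-4\pi}$ (forcing $x<y$) is what lets me invoke Theorem \ref{thm:vapsIdentiff}, whose hypothesis excludes the JC69 diagonal $x=y$.
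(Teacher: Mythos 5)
Your proof is correct and follows essentially the same route as the paper's: the paper observes that the first DLC inequality $2b+2c<1$ places the matrix in case (b) of Theorem \ref{thm:entriesIdentiff} (equivalently, forces the repeated eigenvalue $y$ to be positive) and then cites case (b)\,iii) for non-identifiability, which is exactly your argument transported to eigenvalue coordinates via \eqref{eq:paramBijection}. Your version is somewhat more explicit than the paper's two-line proof in exhibiting a non-empty open region and in noting that the witness of Example \ref{ex:pos_eig} lies on the boundary, but the underlying ideas are identical.
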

\proof{ { It follows from the first inequality in \eqref{eq:DLC} that DLC K80 Markov matrices lie in case (b) of Theorem \ref{thm:entriesIdentiff}.  The second statement follows from case (b) iii) in the same Theorem. }}

\section{{Measuring the set of embeddable and rate identifiable matrices}} \label{sec:Volumes}

In this section we will study how many K80 Markov matrices are embeddable and also how many of those matrices have identifiable rates. We will solve these questions when restricted to some well defined subsets of K80 matrices. We will proceed by using a parametrization of all K80 matrices and computing the volumes {(actually areas)} needed in that set. We will consider the following regions:
\begin{enumerate}[i)]
	\item $\Delta$: The set of all $K80$ Markov matrices.

	\item $\Delta_{\rm{DLC}}$: The set of all DLC matrices in $\Delta$ (see Remark \ref{DLC}).

	\item $\Delta_+$: The set of all $M \in \Delta$ in the connected component of the identity with respect to vanishing determinant (that is, all K80 Markov matrices with only positive eigenvalues).
	
	\item $\Delta_{\rm{dd}}$: The set of all $M \in \Delta$ such that the probability of not mutating is higher than the probability of mutating, i.e. those matrices such that $a> b+2c$. These matrices are said to be \textit{diagonally-dominant} and it is well known that they have identifiable rates (if embeddable) \citep[see][]{Cuthbert72}.
\end{enumerate}


In the previous sections we have been using two different parameterizations of the model (see { Equation (\ref{eq:paramBijection}) and } Figure \ref{fig:Parametrizations}), one in terms of the eigenvalues of the Markov matrix ($x$ and $y$) and the other in terms of its entries ($b$ and $c$) . { The first parametrization can be used to easily describe $\Delta_+$ whereas the second provides an easier description of $\Delta_{\rm{DLC}}$ and $\Delta_{\rm{dd}}$. Using these two parametrizations, we obtain that $\Delta \supset \Delta_{\rm{DLC}} \supset \Delta_+ \supset \Delta_{\rm{dd}} $. Figures \ref{fig:SetsVaps} and  \ref{fig:SetsEntries} illustrate these inclusions in terms of the eigenvalues and the entries of the Markov matrix, respectively. }


\begin{figure}[h]
  \centering
  \begin{tabular}{cccc}
    \includegraphics[width=4cm]{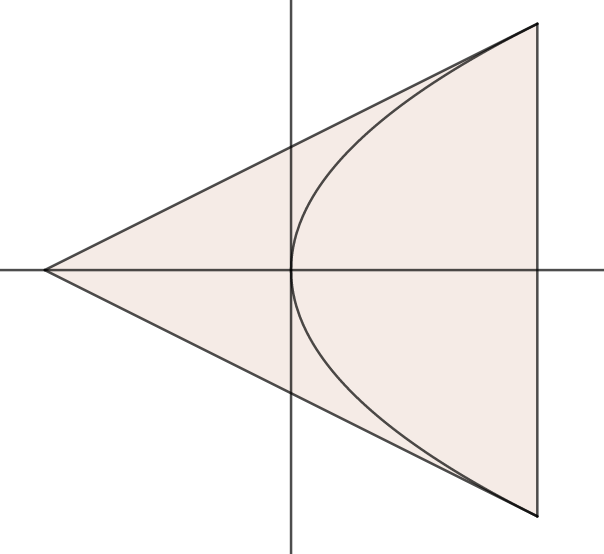}& 
    \includegraphics[width=4cm]{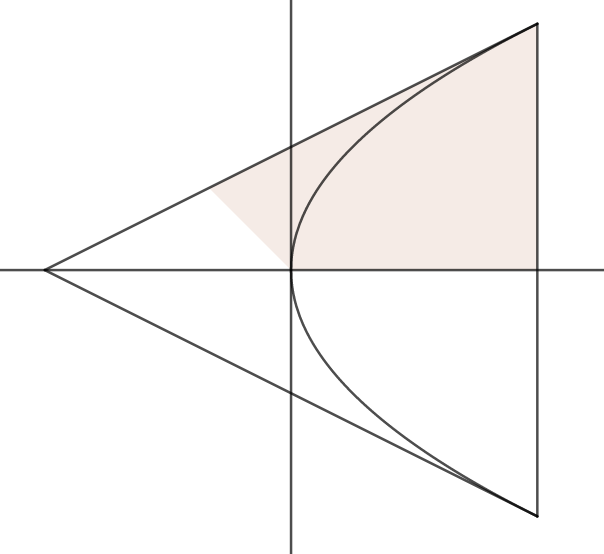}&
      \includegraphics[width=4cm]{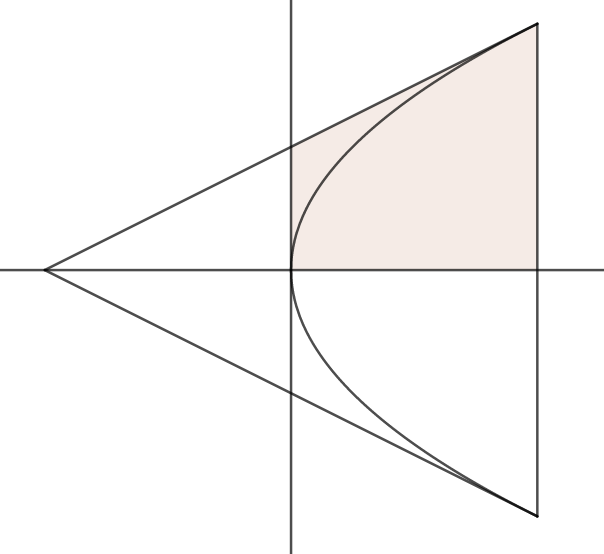}&
    \includegraphics[width=4cm]{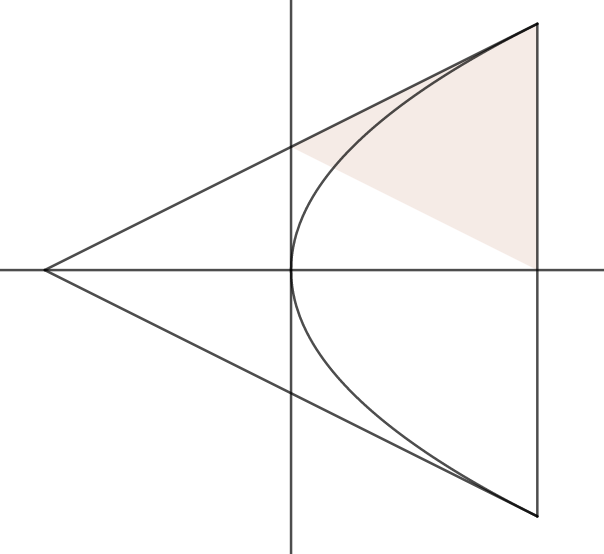}\\
    (a) $\Delta$& (b) $\Delta_{\rm{DLC}}$& (c) $\Delta_+$& (d) $\Delta_{\rm{dd}}$
    \end{tabular}
  \caption{Subsets parametrized in terms of the eigenvalues.}\label{fig:SetsVaps}
\end{figure}


\begin{figure}[h]
  \centering
  \begin{tabular}{cccc}
   \includegraphics[width=4cm]{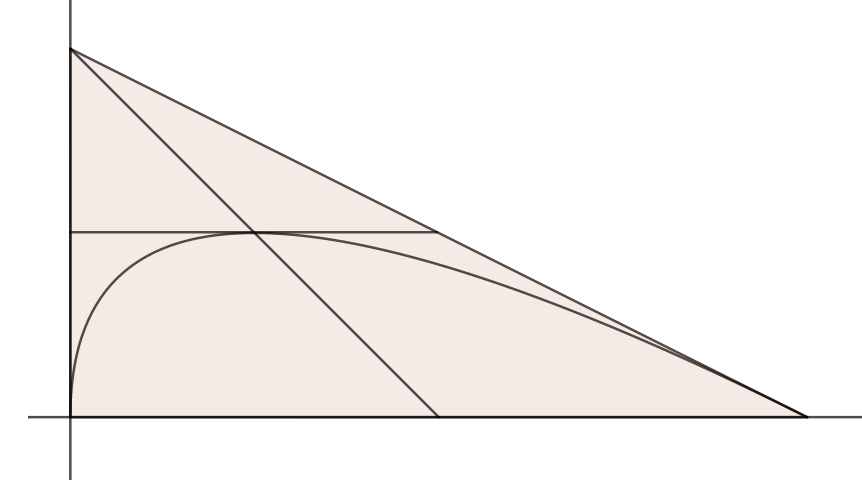}&
      \includegraphics[width=4cm]{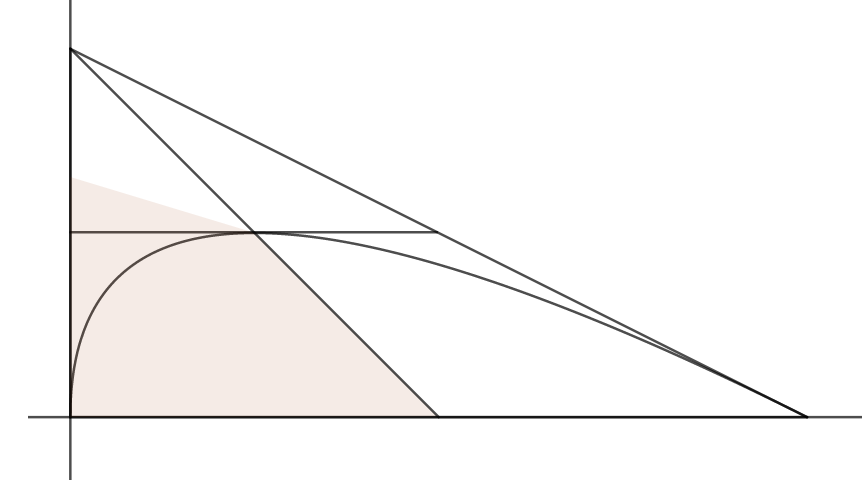}&
  \includegraphics[width=4cm]{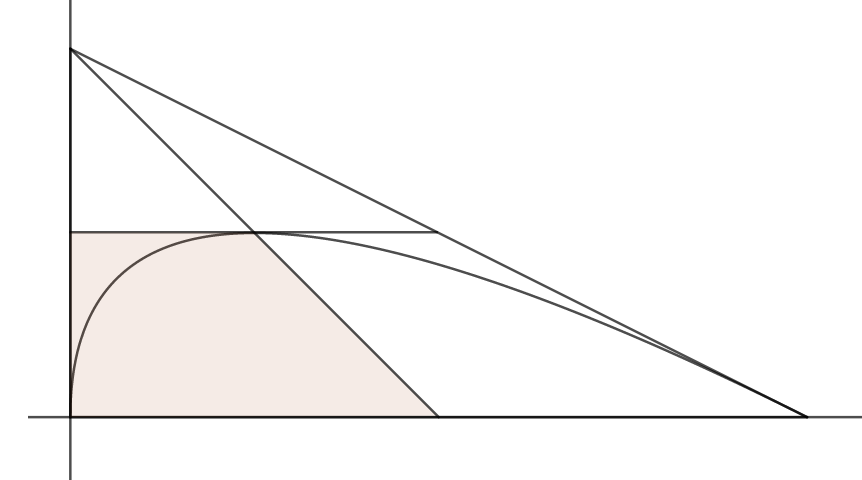}&
  \includegraphics[width=4cm]{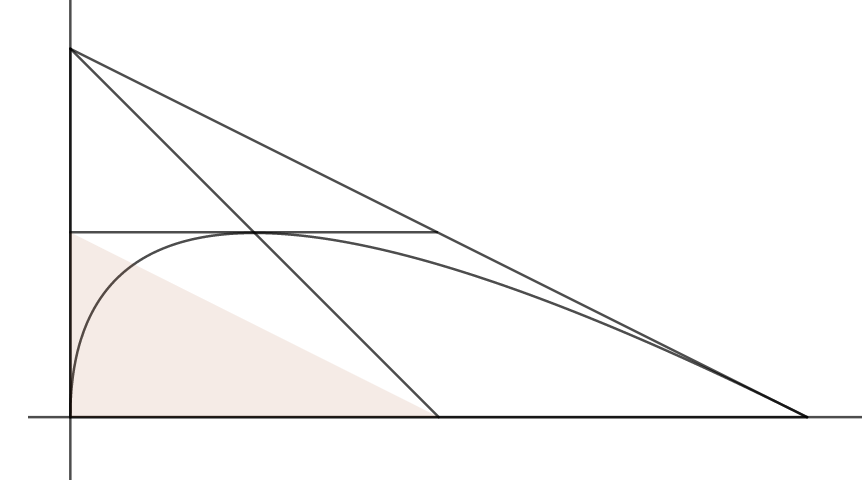}\\
      (a) $\Delta$& (b) $\Delta_{\rm{DLC}}$& (c) $\Delta_+$& (d) $\Delta_{\rm{dd}}$
   \end{tabular}
 \caption{Subsets parametrized in terms of the entries.}\label{fig:SetsEntries}
\end{figure}

Note that, since the {bijection $\varphi$ in \eqref{eq:paramBijection}} is a linear map, the relative volumes of embeddable matrices will not depend on the chosen parameters (entries or eigenvalues). Actually, $\det(D(\varphi))= 8$, so the volume of any subset will be eight times greater in the eigenvalues' parametrization. We decided to use the parametrization in terms of the eigenvalues to compute the volumes we want to know because the expressions appearing {in} the computations are simpler. \\

For a {clearer} picture of embeddability and rate identifiability of those subsets of K80 matrices, intersect Figure \ref{fig:Parametrizations} and Figures \ref{fig:SetsVaps} and \ref{fig:SetsEntries}.

\begin{prop}
Let $\Delta_{\rm{emb}}$ denote the set of embeddable K80 Markov matrices, and let $\Delta_{\rm{idf}}\subset \Delta_{\rm{emb}}$ be the subset of matrices with identifiable rates. Using the parametrization of K80 Markov matrices in terms of its eigenvalues, the following holds: a) $V(\Delta)=2$; b) $V(\Delta_{\rm{DLC}})= 10/12$; c) $V(\Delta_+)= 3/4$; d) $V(\Delta_{\rm{dd}}) = 1/2$; e) $V(\Delta_{\rm{emb}})= \frac{2(1+e^{-3\pi})}{3}$; f) $V(\Delta_{\rm{idf}})= \frac{2(1-e^{-6\pi})}{3}$; g) $V(\Delta_{\rm{emb}} \cap \Delta_{\rm{DLC}})= V(\Delta_{\rm{emb}} \cap \Delta_+)= 2/3$; h) $V(\Delta_{\rm{emb}} \cap \Delta_{\rm{dd}})= V(\Delta_{\rm{idf}} \cap \Delta_{\rm{dd}})=\frac{7-4\sqrt{2}}{3} $.
\end{prop}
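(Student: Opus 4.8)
The plan is to perform every computation in the eigenvalue parametrization $(x,y)$, exactly as the text suggests, and to reduce the whole proposition to a short list of elementary planar area integrals. Since $\varphi$ in \eqref{eq:paramBijection} is linear, the relative shapes are the same in either parametrization, so it suffices to fix the eigenvalue coordinates throughout. The first step is to describe $\Delta$ itself: imposing $b,c\ge 0$ and $a=1-b-2c\ge 0$ and pushing these through $\varphi^{-1}$ shows that $\Delta$ is the triangle with vertices $(1,1)$, $(1,-1)$ and $(-1,0)$, equivalently $-1\le x\le 1$ together with $-\tfrac{1+x}{2}\le y\le \tfrac{1+x}{2}$; a direct shoelace computation gives $V(\Delta)=2$, which is (a).

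Next I would translate each remaining defining condition into an inequality in $(x,y)$, using $b=\tfrac{1+x-2y}{4}$ and $c=\tfrac{1-x}{4}$. The DLC conditions in \eqref{eq:DLC} become simply $y>0$ (from $2b+2c<1$, since $2b+2c=1-y$) and $x+y>0$ (from $b+3c<1$, since $b+3c=1-\tfrac{x+y}{2}$); the positive-eigenvalue region $\Delta_+$ is $\{x>0,\ y>0\}$; and diagonal dominance $a>b+2c$ becomes $x+2y>1$. For embeddability I would invoke Corollary \ref{cor:VapsEmbed}, so that $\Delta_{\rm emb}$ is $\{y>0,\ x\ge y^2\}\cup\{y<0,\ y^2\le x\le e^{-2\pi}\}$ inside $\Delta$, while identifiability (Theorem \ref{thm:vapsIdentiff}) adds only the constraint $x>e^{-4\pi}$ within $y>0$. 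The key observation that keeps every region tractable is that $x\ge y^2$ already forces $y\le\tfrac{1+x}{2}$ (because $(y-1)^2\ge 0$) and $y\ge -\tfrac{1+x}{2}$ (because $(y+1)^2\ge 0$), so for embeddable matrices the triangle boundaries are never binding and the only curved boundary ever involved is the parabola $x=y^2$.

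With these descriptions each area reduces to integrating an $x$-length over an interval in $y$. The triangle/region cases (b)–(d) are immediate: $V(\Delta_+)=\int_0^1\tfrac{1+x}{2}\,dx=\tfrac34$, $V(\Delta_{\rm dd})=\int_0^1 x\,dx=\tfrac12$, and $V(\Delta_{\rm DLC})$ splits at $x=0$ into $\tfrac34+\tfrac1{12}=\tfrac{10}{12}$. For $\Delta_{\rm emb}$ I would split at $y=0$: the part $y>0$ contributes $\int_0^1(1-y^2)\,dy=\tfrac23$ and the part $y<0$ contributes $\int_{-e^{-\pi}}^0(e^{-2\pi}-y^2)\,dy=\tfrac{2e^{-3\pi}}{3}$, giving $\tfrac{2(1+e^{-3\pi})}{3}$; removing the strip $y^2\le x\le e^{-4\pi}$ from the $y>0$ piece then yields $V(\Delta_{\rm idf})=\tfrac{2(1-e^{-6\pi})}{3}$. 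For (g), both $\Delta_+$ and $\Delta_{\rm DLC}$, intersected with $\Delta_{\rm emb}$, collapse to the single region $\{y>0,\ x\ge y^2\}$ (since $x\ge y^2>0$ forces $x>0$ and $x+y>0$), so both areas equal $\tfrac23$. For (h) the relevant boundary point is where $x=y^2$ meets $x+2y=1$, namely $y=\sqrt2-1$, $x=3-2\sqrt2$; splitting at this value gives $\int_0^{\sqrt2-1}2y\,dy+\int_{\sqrt2-1}^1(1-y^2)\,dy=\tfrac{7-4\sqrt2}{3}$, and because $3-2\sqrt2\gg e^{-4\pi}$ the whole diagonally-dominant embeddable region satisfies $x>e^{-4\pi}$ and hence lies in $\Delta_{\rm idf}$, so the two areas in (h) coincide, in accordance with the fact that embeddable diagonally-dominant matrices have identifiable rates.

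The integrals themselves are routine, so the genuine work — and the only place where care is needed — lies in the first two steps: correctly converting each algebraic condition into an $(x,y)$ inequality, and then, for each region, deciding which constraints are actually binding so that the limits of integration are correct (in particular determining, for each fixed $y$, whether the lower limit of $x$ is $y^2$, a linear boundary such as $1-2y$, or the constant $e^{-4\pi}$). Once those regions are pinned down, the parabola $x=y^2$ is the only nonlinear boundary involved and all the stated values follow.
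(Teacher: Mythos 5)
Your proof is correct and follows essentially the same route as the paper: translate every defining condition into the eigenvalue plane $(x,y)$, identify each region's boundaries (the triangle $\Delta$, the lines $y=0$, $x+y=0$, $x+2y=1$, and the parabola $x=y^2$ together with the thresholds $e^{-2\pi}$, $e^{-4\pi}$ coming from Corollary 3.11 and Theorem 4.7), and compute elementary areas. The only differences are cosmetic --- you justify $V(\Delta_{\rm{emb}}\cap\Delta_{\rm{dd}})=V(\Delta_{\rm{idf}}\cap\Delta_{\rm{dd}})$ directly from the bound $3-2\sqrt{2}>e^{-4\pi}$ instead of citing Cuthbert's uniqueness result for diagonally dominant matrices, and your decomposition of part (h), $\int_0^{\sqrt{2}-1}2y\,dy+\int_{\sqrt{2}-1}^{1}(1-y^2)\,dy$, correctly yields $\tfrac{7-4\sqrt{2}}{3}$, whereas the paper's displayed summand $2(\sqrt{2}-1)^2$ is a typo for $(\sqrt{2}-1)^2$.
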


\begin{proof}
\begin{enumerate}[a)]
	\item $\Delta$ is the triangle with vertices $(-1,0), \ (1,1) \text{ and } (1,-1)$ which has area 2.
	
	\item $\Delta_{\rm{DLC}}$ is the polygon with vertices $(-1/3,1/3), \ (1,1), \ (1,0) \text{ and } (0,0)$ which has area 10/12.

	\item $\Delta_+$ is the trapezoid with vertices $(0,1/2), \ (1,1), \ (1,0) \text{ and } (0,0)$ which has area 3/4.

	\item $\Delta_{\rm{dd}}$ is the triangle with vertices $(0,1/2), \ (1,1) \text{ and } (1,0)$ which has area 1/2.

	\item It follows from Corollary \ref{cor:VapsEmbed} that $V(\Delta_{\rm{emb}}) = \displaystyle \int_0^1 \int^1_{y^2} 1 \ dx dy \ + \ \ \int_{-e^{-\pi}}^0 \int_{y^2}^{e^{-2\pi}} 1 \ dx dy = \frac{2(1+e^{-3\pi})}{3}$.

	\item It follows from Theorem \ref{thm:vapsIdentiff} that $V(\Delta_{\rm{idf}})= \displaystyle \int_0^1 \int^1_{y^2} 1 \ dx dy \ - \ \ { \int_0^{e^{-2\pi}} } \int_{y^2}^{e^{-4\pi}} 1 \ dx dy =\frac{2(1-e^{-6\pi})}{3}$.

	\item {As shown in the proof of Corollary \ref{DLC}, the entries of a DLC K80 Markov matrix must satisfy $1-b-2c>b$ and $1-b-2c>c$. Hence, the repeated eigenvalue of a K80 DLC Markov matrix $y=1-2b-2c$ (see equation \eqref{eq:eigenvalues}), must be positive. Since embeddable matrices have positive determinant, we have} that $\Delta_{\rm{emb}} \cap \Delta_{\rm{DLC}}= \Delta_{\rm{emb}} \cap \Delta_+$ (see Figure \ref{fig:SetsVaps}). It follows from Corollary \ref{cor:VapsEmbed} that the volume of this set can be computed as $\displaystyle \int_{0}^1 \sqrt{x} \ dx =2/3$.

	\item The set of diagonally-dominant matrices is the triangle with vertices $(0,0.5), \ (1,1)$ and $(1,0)$. It is known that diagonally-dominant Markov matrices have only one real logarithm \citep{Cuthbert72} and hence the first equality follows. Furthermore the points where the curve $y^2=x$ intersects with the boundary of $\Delta_{\rm{dd}}$ are $(3-2\sqrt{2}, -1+\sqrt{2})$ and $(1,1)$, thus it follows from Corollary \ref{cor:VapsEmbed} that the volume can be computed as$\displaystyle \int^1_{\sqrt{2}-1} \int^1_{y^2} 1 \ dx dy + 2(\sqrt{2}-1)^2= \frac{7-4\sqrt{2}}{3}$
\end{enumerate}
\end{proof}

\begin{table}
 \centering
 	\begin{footnotesize}
 		\begin{tabular}{|c|cccc|}
  		\hline
  		& & & & \\
 			& $\Delta$ & $\Delta_{\rm{DLC}}$ & $\Delta_+$ & $\Delta_{\rm{dd}}$ \\
		  & & & & \\
		  \hline
		  & & & & \\
			$\frac{V(\cdot)}{ V(\Delta)}$ & 1 & $\frac{5}{12}\approx 0.4166666667$ & 0.375 & 0.25 \\
		  & & & & \\
			$\frac{V(\Delta_{\rm{emb}} \cap \ \cdot ) }{V(\cdot)}$ & $\frac{1+e^{-3\pi}}{3} \approx$ 0.3477379727 & 0.8 & $\frac{8}{9}\approx 0.8888888889$ & $\frac{14-8\sqrt{2}}{3} \approx$ 0.895430501 \\
		  & & & & \\
			$\frac{V(\Delta_{\rm{idf}} \cap \ \cdot ) }{V(\Delta_{\rm{emb}} \cap \ \cdot )}$ & $\frac{1-e^{-6\pi}}{1+e^{-3\pi}}\approx$ 0.9999193000 & $1-e^{-6\pi} \approx$ 0.9999999935 & $1-e^{-6\pi} \approx$ 0.9999999935 & 1 \\
		  & & & & \\
		  \hline
		\end{tabular}
 \end{footnotesize}
 \caption{\label{tab:volume} Relative volumes {of the spaces labelling the columns of the table (which are referred to as ``$\,\cdot\,$'' in each row)} within the set of K80 Markov matrices (first row), relative volumes of embeddable matrices within those spaces (second row) and relative volume of those of them with identifiable rates (third row). The values are rounded to the 10th decimal.}
\end{table}
The table \ref{tab:volume} shows some relative volumes of the regions defined at the beginning of this section, and is included here for quick reference. The computations involved are straightforward from the values of the volumes of the preceding proposition.

\section{Discussion} \label{sec:Discussion}

We have studied the embeddability and the identifiability of mutation rates for the Markov matrices in the K80 model of nucleotide substitution. With the results of the present paper and \citep{JJ}, the problem of embeddability is completely solved for the K81 model and its submodels K80 and JC69. One of the relevant results of the present paper is proving the existence of an open subset of embeddable K80 matrices whose principal logarithm is not a rate matrix {(see Remark \ref{rmk_Log})}.
Moreover, we have also  provided an open set of  matrices in the model that have infinitely many Markov generators (Theorem 1.1 $b.iii)$ and $c.ii)$). 
Note that this might lead to confusing results when trying to infer parameters as it is usually done in phylogenetics via a maximum likelihood approach. We further develop this issue in what follows.


In phylogenetic trees that evolve under a Markov evolutionary model one usually assigns a branch length to the edges of the tree (see Remark \ref{rk:BranchLength}). This length accounts for the expected number of substitutions per site that have occurred along that edge. It is well known that, if a K80 Markov matrix $M$ has governed evolution on an edge, then the expected number of elapsed substitutions per site can be approximated by $l(M)=\frac{-1}{4}\log(\det(M))$ {as long as $M$ is a product of Markov matrices close to the identity \citep{barry_hartigan}. According to Corollary \ref{cor:brlength}, there are embeddable K80 matrices $M$ with $l(M)=\pi$ whose rates are not identifiable and whose principal logarithm is not a rate matrix (Example \ref{ex:negative}). Therefore, there are K80 matrices with branch length $\pi$ whose embeddability property cannot be decided by looking at the principal logarithm (this is a usual practice when deciding embeddability, see  \citet{verbyla}, and might lead to erroneous results as shown here). The expected number of substitutions per site is related to molecular rate: different species evolve at different molecular rates and therefore, whether an expected number of substitutions equal to $\pi$ is large or not, depends on the species under consideration. This notion can be also related to (astronomical) time if a molecular clock can be assumed. As an example, the molecular rate is estimated to be in the range of 0.1-10 substitution per site per million year for bacteria \citep{Duchene}, 0.03-0.1 for birds or 0.2-0.8 for humans \citep{Ho2007}. Therefore, a branch of length $\pi$ could represent from $\pi/10$  to $\pi/0.03\approx 104.72$ million years, depending on the species considered (at least for the examples provided above).}

{On the other hand, as noted in Corollary \ref{cor:CharOfEmb}, these cases where embeddability cannot be decided by looking at the principal logarithm correspond to negative multiple eigenvalue $y$ (Remark \ref{rmk_Log}). Although these cases might not seem biologically realistic, there are embeddable K80 matrices with negative eigenvalues that are closer to the identity matrix than some other embeddable K80 matrices with positive eigenvalues. For more general models, one might be able to find examples of embeddable matrices even closer to the identity matrix with such a behaviour (work in progress).}

We have obtained that K80 DLC matrices are embeddable if and only if its principal logarithm is a rate matrix (Remark \ref{DLC}). In parameter estimation in phylogenetics, {it might be relevant to restrict to a subset of matrices where identifiability of the substitution parameters is guaranteed (for a discussion see \citealp{Zou2011} and \citealp{Kaehler2015}). The set of all DLC matrices is one of these subsets} \citep[see][]{Chang}.
We have found an open subset of DLC matrices whose rates are not identifiable (see Corollary \ref{DLC}), which actually have infinitely many Markov generators. This implies that different continuous time processes could lead to the same final observations, so one has to be careful when doing maximum likelihood estimation. The DLC matrix that is closest to the identity and has non-identifiable rates corresponds to branch length $2\pi$, see Example \ref{ex:pos_eig} and Theorem \ref{thm:vapsIdentiff}.  

In a work in progress, we are studying a more general Markov model which allows complex eigenvalues and therefore opens the door to exploring ill-behavior embeddable matrices (in terms of principal log not being real) that are closer to the identity matrix. Moreover, initial studies for the general Markov models suggest the existence of an open subset of embeddable Markov matrices with non-identifiable rates. We will address this case in a forthcoming paper. The identifiability of rates for JC69 matrices is a problem not solved in the present paper (the case of an eigenvalue with multiplicity 3 adds some technical difficulties), but we expect to solve it in future work.

The computation of relative volumes carried out in Section 5 shows that less than 35\% of the Markov matrices within the K80 model are embeddable. This suggests that restricting to the continuous-time approach within this model supposes a strong restriction for the inference of the Markov matrix that rules a given evolutionary process. {At the same time, this restriction guarantees a biological realism in the modeling process (see Theorem 2.2 by \citet{JJ}) and might be an advantage for reconstruction methods that estimate the parameters of the model. On the other hand, restricting to continuous-time models might lead to overestimation of genetic distances for nonstationary data \citep{Kaehler2015} and, in general, the choice of an accurate model for the data is indispensable.} The computations in Section 5 also exhibit that even if K80 embeddable matrices with non-identifiable rates describe a subset of positive measure, this case is quite marginal and represents less than 1 over 10.000 embeddable matrices.

\section*{Author contribution}

MC and JFS conceived the project, revised the proofs and computations and drafted part of the manuscript. JRL wrote the core of the manuscript and worked out the proofs and computations. All authors read, revised and approved the final manuscript.

\section*{Acknowledgements}
All authors are partially funded by AGAUR Project 2017 SGR-932 and MINECO/FEDER Projects MTM2015-69135 and MDM-2014-0445. J Roca-Lacostena has received also funding from Secretaria d'Universitats i Recerca de la Generalitat de Catalunya (AGAUR 2018FI\_B\_00947) and European Social Funds. The authors  would like to express their gratitude to Jeremy Sumner for his remarks and interesting conversations on the topic. 

\bibliographystyle{spbasic}

\end{document}

